\newcommand{\la}{\langle}
\newcommand{\ra}{\rangle}
\newcommand{\R}{\mathbb{R}}
\def\bpm{\begin{pmatrix}}
\def\epm{\end{pmatrix}}
\newcommand{\beq}{\begin{equation}}
\newcommand{\enq}{\end{equation}}
\newcommand{\bel}{\begin{lemma}}
\newcommand{\bet}{\begin{theorem}}
\newcommand{\ent}{\end{theorem}}
\newcommand{\tr}{\mathrm{tr}}
\newcommand*{\addFileDependency}[1]{
  \typeout{(#1)}
  \@addtofilelist{#1}
  \IfFileExists{#1}{}{\typeout{No file #1.}}
}
\newcommand{\Tr}{\mathrm{tr}}
\newcommand{\ketbra}[1]{|#1\rangle\langle#1|}
\newcommand{\suppress}[1]{}
\newcommand{\bra}[1]{\langle #1|}
\newcommand{\ket}[1]{|#1 \rangle}
\newcommand{\braket}[2]{\langle #1|#2\rangle}
\def\be{\begin{equation}}
\def\ee{\end{equation}}
\newcommand{\gexcl}{\mathcal{G}_{\mathrm{ex}}}
\newcommand*{\rom}[1]{\expandafter\@slowromancap\romannumeral #1@}
\appto{\appendix}{%
 \@ifstar{\def\theequation@prefix{A.}}%
 {}%
}
\mathchardef\mhyphen="2D
\appto{\appendix}{%
 \@ifstar{\def\theequation@prefix{A.}}%
 {}%
}
\newtheorem{theorem}{Theorem}[section]
\newtheorem{lemma}[theorem]{Lemma}
\newtheorem{corollary}[theorem]{Corollary}
\newtheorem{definition}[theorem]{Definition}
\newtheorem*{definition*}{Definition}
\newtheorem*{example*}{Example}
\newtheorem*{main result}{Main Result}
\newcommand{\T}{\mathrm{T}}
\def\01{\{0,1\}}
\colorlet{shadecolor}{blue!11}
\begin{document}

\title{ Graph-theoretic approach to dimension witnessing}
\author[1,2]{ \small Maharshi Ray \footnote{\href{mailto:maharshi91@gmail.com}{e-mail : \texttt{maharshi91@gmail.com}}} }
\author[1]{Naresh Goud Boddu}
\author[1]{Kishor Bharti}
\author[1,2,3]{Leong-Chuan Kwek}
\author[4,5]{Ad\'{a}n Cabello}

\affil[1]{ \footnotesize Centre for Quantum Technologies, National University of Singapore}
\affil[2]{\footnotesize MajuLab, CNRS-UNS-NUS-NTU International Joint Research Unit, Singapore UMI 3654, Singapore}
\affil[3]{\footnotesize National Institute of Education, Nanyang Technological University, Singapore 637616, Singapore}
\affil[4]{\footnotesize Departamento de F\'{i}sica Aplicada II, Universidad de Sevilla, E-41012 Sevilla, Spain}
\affil[5]{\footnotesize Instituto Carlos I de F\'{\i}sica Te\'orica y Computacional, Universidad de Sevilla, E-41012 Sevilla, Spain}

\date{\vspace{-5ex}} %
\maketitle


\begin{abstract}
A fundamental problem in quantum computation and quantum information is finding the minimum quantum dimension needed for a task. For tasks involving state preparation and measurements, this problem can be addressed using only the input-output correlations. This has been applied to Bell, prepare-and-measure, and Kochen-Specker contextuality scenarios. Here, we introduce a novel approach to quantum dimension witnessing for scenarios with one preparation and several measurements, which uses the graphs of mutual exclusivity between sets of measurement events. We present the concepts and tools needed for graph-theoretic quantum dimension witnessing and illustrate their use by identifying novel quantum dimension witnesses, including a family that can certify arbitrarily high quantum dimensions with few events.
\end{abstract}


\section{Introduction}


The dimensionality of a quantum system is crucial for its ability to perform quantum information processing tasks. For example, the security of some protocols for quantum key distribution and randomness expansion depends on the presumed dimensionality of the underlying physical system. The dimensionality also plays a crucial role in device characterisation tasks. Also, non-classical phenomena such as Kochen-Specker contextuality is known to require quantum systems of dimension at least three \cite{Kochen:1967JMM}. Therefore, it is of fundamental importance to have efficient tools to determine the dimensionality of the underlying Hilbert space where the measurement operators act on the physical system for any experimental setup.

There are several approaches to tackle this problem. One of them is known as {\em self-testing} \cite{Yao_self}. The idea of self-testing is to identify unique equivalence class of configurations corresponding to extremal quantum violation of a Bell inequality. The members of the equivalence class are related via some fixed local isometry. The dimension of the individual quantum system can be lower bounded by identifying the equivalence class of configurations attaining the optimality \cite{Yao_self}. Though initially proposed in the setting of Bell non-locality, the idea of self-testing has been extended to prepare-and-measure scenarios, contextuality, and quantum steering \cite{tavakoli2018self, BRVWCK19, bharti2019local,vsupic2016self,shrotriya2020self}. For a review of self-testing, we refer to \cite{vsupic2019self}. It is important to stress that only extremal points of the quantum set of correlations that can be attained via finite-dimensional configurations admit self-testing \cite{goh2018geometry}.

The second approach is {\em tomography}. Quantum tomography is a process via which the description of a quantum state is obtained by performing measurements on an ensemble of identical quantum states. For quantum systems of dimension $d$, to estimate an unknown quantum system to an error $\epsilon$ (in $l_1$ norm) requires $\Theta \left(d^2 \epsilon^{-2}\right)$ copies of a quantum state \cite{OW17}. One drawback of this approach is that it requires a prior knowledge of the dimensionality of the system.

The third approach is {\em dimension witnesses} \cite{brunner_testing_2008}. This is the approach we will focus on in this paper.
The goal of dimension witness is to render a lower bound on the dimensionality of the underlying physical system based on the experimental statistics. For example, a quantum dimension witness is a quantity that can be computed from the input-output correlations and whose value gives a lower bound to the dimension of the Hilbert space needed to accommodate the density matrices and the measurement operators needed to produce such correlations. Dimension witnesses have been investigated for the following types of scenarios:
\begin{enumerate}
\item \label{type1} \textbf{Bell scenarios:} Here, quantum dimension witnesses are based on the observation that certain bipartite Bell non-local correlations are impossible to produce with quantum systems of local dimension $d$ (and thus global dimension $d^2$) or less, implying that the experimental observation of these correlations certifies that the quantum local dimension is at least $d+1$ \cite{brunner_testing_2008,vertesi_bounding_2009,brunner_dimension_2013}. There are dimension witnesses of this type for arbitrarily high quantum local dimension $d$ \cite{brunner_testing_2008}, but they require preparing entangled states of dimension $d^2$ and conditions of spatial separation that do not occur naturally in quantum computers. This approach to dimension witnessing is related to self-testing based on Bell non-local correlations \cite{Yao_self}. A Bell dimension witness certifies the minimum quantum dimension accessed by the measurement devices acting on the physical systems prepared by a single source. 
\item \label{type2} \textbf{Prepare-and-measure scenarios:} These scenarios consists of $p$ different preparation sources and $m$ measurements acting on the physical systems emitted by those sources. Prepare-and-measure dimension witnesses require $p > d+1$ preparations to certify classical or quantum dimension $d$ \cite{wehner2008lower,gallego_device-independent_2010}. They have been used to experimentally certify in a device-independent way small classical and quantum dimensions \cite{hendrych_experimental_2012,ahrens2012experimental,d2014device}. A prepare-and-measure dimension witness certifies the minimum classical or quantum dimension spanned by the $p$ preparation sources and the $m$ measurements. 
\item \label{type3} \textbf{Kochen-Specker contextuality scenarios:} They consist of a single state preparation followed by a sequence of compatible ideal measurements chosen from a fixed set. Two measurements are compatible (or jointly measurable) when there is a third measurement that works as a refinement for both of them, so each of them can be measure by coarse graining the third measurement and thus both of them can be jointly measured. A measurement is ideal when it yields the same outcome when repeated on the same physical system and does not disturb any compatible measurement. Checking experimentally that a set of measurements are ideal and have certain relations of compatibility can be done from the input-output correlations \cite{LMZNCAH18}. Correlations between the outcomes of ideal measurements are Kochen-Specker contextual when they cannot be reproduced with models in which measurements have predetermined context-independent outcomes~\cite{cabello2008experimentally,KCBS}. Quantum Kochen-Specker contextuality dimension witnesses are based on the observation that certain Kochen-Specker contextual correlations are impossible to produce with quantum systems of dimension $d$ or less, implying that its experimental observation certifies a local quantum dimension of at least $d$. The problem of contextuality dimension witnesses is that they require testing in addition that the measurements are ideal and satisfy certain relations of compatibility. A {\em state-dependent} contextuality dimension witness certifies the minimum quantum dimension accessed by the measurement devices acting on the physical systems prepared by a single source. In a {\em state-independent} contextuality scenario, these measurements form a state-independent contextuality set in dimension $d$, defined as one for which the quantum predictions for sequences of compatible measurements for any quantum state in dimension $d$ cannot be reproduced by non-contextual models \cite{cabello2015necessary}. The minimum quantum dimension for contextual correlations have been studied in~\cite{GBCKL14}. A state-independent Kochen-Specker contextuality dimension witness certifies the minimum quantum dimension accessed by the measurement devices, without relating the conclusion to any particular source.
\end{enumerate}
                                                     
In this paper, we introduce a novel graph-theoretic approach to quantum dimension witnessing. We deal with abstract structures of measurement events produced for one preparation and several measurements, as is the case in Kochen-Specker contextuality and Bell scenarios. This means that our approach will always work in Kochen-Specker contextuality scenario and sometimes in specific Bell scenarios. 

Our approach is, first, based on the observation that the problem of finding dimension witnesses can be reformulated as the problem of finding correlations for structures of exclusivity which are impossible to produce with systems of quantum dimension $d$ or less, implying that its experimental observation certifies a quantum dimension of at least $d+1$. Second, it is based on the observation that, given a set of events and their relations of mutual exclusivity, the sets of correlations allowed in quantum theory are connected to well-known and easy to characterize invariants and sets in graph theory \cite{CSW}. In fact, the power of the graph-theoretic approach to dimension witnessing is based on three pillars:
\begin{itemize}  
\item The connection between correlations for structures of exclusivity and easy to characterize sets in graph theory. This connection allows us to use tools and results of graph theory for quantum graph dimension witnessing. 
\item The observation that finding dimension witnesses in scenarios with many measurements is difficult due to the difficulty to fully characterize in these scenarios the sets of correlations that cannot be achieved with a given dimension. In contrast, the graph approach allows us to rapidly identify structures of exclusivity that have dimension witnesses, even though many of them correspond to scenarios with many measurements. 
\item The connection between abstract structures of exclusivity and some specific contextuality scenarios (those consisting of dichotomic measurements having a structure of compatibility isomorphic to the structure of exclusivity). This assures that any quantum dimension witness for a graph of exclusivity always admits a physical realization in {\em some} Kochen-Specker contextuality scenario. Moreover, by imposing extra constraints, we can find, in principle, those dimension witness that also admit a physical realizations in a {\em specific} Kochen-Specker contextuality or Bell scenario.
\end{itemize}


The paper is organized as follows. In Sec.~\ref{notation_context} we introduce some standard definitions of graph theory and the graph-theoretic approach to correlations. In Sec.~\ref{sec2}, we use this graph-theoretic approach to study quantum dimension witness. Specifically, in Subsec.~\ref{heuristics}, we present a heuristic technique to compute a lower bound on the $d$ dimensional-restricted quantum value and find the corresponding $d$-dimensional quantum realisations. We illustrate the usefulness of this tool with some examples.
In Subsec.~\ref{sec4Qites}, we introduce a family of graphs, which we call the $k$-Qite family, and show that their elements are relatively simple quantum dimension witness for any dimension $k \geq 3$. Finally, in Sec.~\ref{disc}, we conclude by listing future directions for research.

Most of the notations used in the paper are self-explanatory. A graph describes relationships between several entities or vertices. We denote an edge between two vertices $i$ and $j$ by the symbol $i \sim j$. A class of commonly studied graphs is the cycles on $n$ vertices, which we denote by $C_n$. The work also uses semidefinite programming where we use the symbol $S_+^{n}$ to denote the class of positive semi-definite hermitian matrices of size $n \times n$.


\section{Graph theoretic approach to contextuality} 
\label{notation_context}


Consider an experiment in the black-box setting. 
An outcome $a$ and its associated measurement $M$, are together called a measurement event and denoted as $(a|M)$.

\begin{definition}(Exclusive event)
Two events $e_{i}$ and $e_{j}$ are defined to be exclusive if there exists a measurement $M$ such that $e_{i}$ and $e_{j}$ correspond to different outcomes of $M,$ i.e. $e_{i}=\left(a_{i} \mid M\right)$ and $e_{j}=\left(a_{j} \mid M\right)$ such that $a_{i} \neq a_{j}.$
\end{definition}

\begin{definition}(Exclusivity graph)
For a family of events $\left\{e_{1}, e_{2} \ldots e_{n}\right\}$ we associate a simple undirected graph, $\gexcl:=(V, E),$ with vertex set $V$ and edge set $E$ such that two vertices $i, j \in V$ share an edge if and only if $e_{i}$ and $e_{j}$ are exclusive events. $G$ is called an exclusivity graph.
\end{definition}

Now we consider theories that assign probabilities to the events corresponding to its vertices. Concretely, a {\em behaviour} corresponding to $\gexcl$ is a mapping $p: [n]\to [0,1]$, such that $p_i+p_j\le 1$, for all $i\sim j$, where we denote $p(i)$ by $p_i$. 
Here, the non-negative scalar $p_i\in [0,1]$ encodes the probability that measurement event $e_i$ occurs. Furthermore, note that two exclusive events $e_i$ and $e_j$ implies the linear constraint $p_i+p_j\le 1$.

A behaviour $p: [n]\to [0,1]$ is {\em deterministic non-contextual} if each $p_i \in \{0,1\}$ such that $p_i+p_j \leq 1$ for exclusive events $e_i$ and $e_j$. A {\em deterministic non-contextual} behaviour can be considered as a vector in $\mathbb{R}^n$. The polytope of {\em non-contextual behaviours}, denoted by $\mathcal{P}_{NC}(\gexcl)$, is the convex hull of all deterministic non-contextual behaviours. The behaviours that do not lie in $\mathcal{P}_{NC}(\gexcl)$ are called {\em contextual}. It is worthwhile to mention that in combinatorial optimisation, one often encounters the {\em stable set} polytope of a graph $G$, $STAB(G)$ (defined below). It is quite easy to see that stable sets of $G$ (a subset of vertices, where no two vertices share an edge between them) and {\em deterministic} behaviours coincide. 

\begin{definition}
\[ STAB(G) = \{ conv(x) : x \text{ is a characteristic vector of a stable set of } G \}
\] 
\end{definition}It thus follows from the definition that $\mathcal{P}_{NC}(\gexcl)=STAB(\gexcl)$. 

Lastly, a behaviour $p: [n]\to [0,1]$ is called {\em quantum} if there exists a quantum state $\ket{\psi}$ and projectors $\Pi_1,\ldots \Pi_n$ acting on a Hilbert space $\mathcal{H}$ such that 
\be p_i= \bra{\psi}\Pi_i \ket{\psi}, \forall i\in [n] \text{ and } \Tr(\Pi_i\Pi_j)=0, \text{ for } i\sim j.\ee 
We refer to the ensemble $\ket{\psi}, \{\Pi\}_{i=1}^n$ as a {\em quantum realization} of the behaviour $p$. 
The convex set of all quantum behaviours is denoted by $\mathcal{P}_{Q}(\gexcl)$. It turns out this set too is a well studied entity in combinatorial optimisation, namely the {\em theta body}. 

\begin{definition}
The theta body of a graph $G=([n],E)$ is defined by: 
$${\rm TH}(G)=\{x\in \R^n_+: \exists Y\in \mathbb{S}^{1+n}_+, \ Y_{00}=1, \ Y_{ii}=x_i = Y_{0i} \quad \, \forall i \in [n], \ Y_{ij}=0, \forall (i,j)\in E\}.$$
\end{definition}

The fact that $\mathcal{P}_{Q}(\gexcl) = TH(\gexcl)$, was observed in ~\cite{CSW} and follows by taking $d = \ket{\psi}$ and $w_i = \Pi_i \ket{\psi} /\sqrt{\bra{\psi}\Pi_i \ket{\psi}} \ \forall i \in [n]$, in the following lemma. 
\begin{lemma}\label{startpoint}
We have that $x\in TH(G)$ iff there exist unit vectors $d,w_1,\ldots,w_n$ such that 
\be\label{csdcever}
x_i=\la d,w_i\ra^2, \forall i\in [n] \text{ and } \la w_i, w_j\ra=0, \text{ for } (i,j)\in E.
\ee
\end{lemma}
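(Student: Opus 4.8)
The plan is to use the standard correspondence between positive semidefinite matrices and Gram matrices of vectors, applied to the matrix $Y$ appearing in the definition of $TH(G)$.

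\emph{From $TH(G)$ to vectors.} Starting from $x \in TH(G)$ witnessed by some $Y \in \mathbb{S}^{1+n}_+$ with $Y_{00}=1$, $Y_{ii}=x_i=Y_{0i}$, and $Y_{ij}=0$ for $(i,j)\in E$, I would take a Gram factorization $Y_{ij}=\la v_i,v_j\ra$ of the PSD matrix $Y$ by vectors $v_0,v_1,\dots,v_n$ in some real inner product space. Then $d:=v_0$ is a unit vector since $Y_{00}=1$; moreover $\la d,v_i\ra = \|v_i\|^2 = x_i$, and $\la v_i,v_j\ra = 0$ whenever $(i,j)\in E$. To obtain \emph{unit} vectors, for each $i$ with $x_i>0$ I would normalize, $w_i := v_i/\sqrt{x_i}$, so that $\la d,w_i\ra=\sqrt{x_i}$, hence $\la d,w_i\ra^2 = x_i$, and edge-orthogonality is preserved. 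For each $i$ with $x_i=0$ one has $v_i=0$, and I would instead enlarge the ambient space and take $w_i$ to be a fresh unit vector chosen mutually orthogonal to $d$ and to all the other $w_j$; this keeps $\la d,w_i\ra^2 = 0 = x_i$ and leaves all edge constraints intact, yielding the required vectors.

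\emph{From vectors to $TH(G)$.} For the converse, given unit vectors $d,w_1,\dots,w_n$ with $x_i=\la d,w_i\ra^2$ and $\la w_i,w_j\ra=0$ for $(i,j)\in E$, I would set $u_i:=\la d,w_i\ra\, w_i$ and let $Y$ be the Gram matrix of the tuple $(d,u_1,\dots,u_n)$, indexed by $\{0,1,\dots,n\}$. Then $Y\succeq 0$ automatically, and one checks directly that $Y_{00}=\|d\|^2=1$, $Y_{0i}=\la d,u_i\ra=\la d,w_i\ra^2=x_i$, $Y_{ii}=\|u_i\|^2=\la d,w_i\ra^2=x_i$, and $Y_{ij}=\la d,w_i\ra\la d,w_j\ra\la w_i,w_j\ra=0$ on edges. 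Since each $x_i$ is a square we also have $x\in\R^n_+$, so $x\in TH(G)$.

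The only delicate point is the degenerate case $x_i=0$ in the first direction: the Gram factorization produces the zero vector there, so a unit vector has to be supplied by hand without violating the orthogonality relations — this is precisely why the statement places no bound on the dimension of the ambient Hilbert space. Everything else is a routine computation with inner products.
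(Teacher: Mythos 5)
Your proposal is correct and follows essentially the same route as the paper: Gram-factorize $Y$ and normalize the column vectors in the forward direction, and in the converse take the Gram matrix of $d$ together with the projections $\la d,w_i\ra w_i$. You are in fact slightly more careful than the paper, which writes $w_i = v_i/\|v_i\|$ without addressing the degenerate case $x_i=0$ (where $v_i=0$) that you patch by adjoining fresh orthogonal unit vectors.
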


\begin{proof}Let $x\in {\rm TH}(G)$. By definition, $x$ is the diagonal of a matrix $Y$ satisfying $Y\in \mathbb{S}^{1+n}_+, \ Y_{00}=1, \ Y_{ii}=Y_{0i}, \ Y_{ij}=0, \forall (i,j)\in E$. Let $Y=Gram(d,v_1,\ldots,v_n)$. Define $w_i={v_i\over \|v_i\|}$. Using that $x_i=Y_{ii}=Y_{0i}$ we get that 
$$x_i=\la v_i,v_i\ra=\la d,v_i\ra=\la d,w_i\|v_i\|\ra=\|v_i\|\la d,w_i\ra.$$
Lastly, note that $\la d,w_i\ra=\la d, {v_i\over \|v_i\|}\ra={ \la v_i,v_i\ra \over \|v_i\|}=\|v_i\|.$ Combining these two equations we get that 
$$x_i=\la d,w_i\ra^2.$$

\noindent Conversely, let $Y$ be the Gram matrix of $d, \la d,w_1\ra w_1,...,\la d,w_1\ra w_1$. Note that $\la d,w_i\ra w_i$ is the orthogonal projection of $d$ onto the unit vector $w_i$. It is easy to see that $Y$ has all the desired properties.
\end{proof} 

\noindent In the above lemma, the vectors $w_i$, for $i \in [n]$, are sometimes referred to as an orthonormal representation (OR) of $G$. 
\begin{definition}(orthonormal representation) An orthonormal representation of a graph $G = (V,E)$, is a set of unit vectors $w_i$ for $i \in [|V|]$, such that $\braket{w_i}{w_j} = 0, \text{ for all } (i,j) \in E$.
\end{definition}

\noindent The cost of this orthonormal representation of the graph is defined as $\lambda_{\max}\left( \sum_{i \in [|V|]} \ketbra{w_i}\right)$.

\medskip 

Next, we turn our attention to the sum $S = p_1 + p_2 + \cdots + p_n$, where $p \in \mathcal{P}_{NC}(\gexcl)$ is a {\em non-contextual} behaviour. The set of non-contextual behaviors forms a bounded polyhedron i.e. a polytope. The facets of the aforementioned polytope define tight non-contextuality inequalities, which correspond to half-spaces. This explains why we are interested in $\sum_i p_i $. The maximum of $S$ over {\em deterministic} behaviours is the same as the maximum of $S$ over {non-contextual} behaviours. To see this, let $p \in \mathcal{P}_{NC}(\gexcl)$ be a maximizer of $S$. We can write $p$ as a convex sum of deterministic behaviours, that is $p = \sum_j \lambda_j p^{(j)}$, where $p^{(j)}$ are deterministic behaviours and $\lambda_i > 0, \ \sum_i \lambda_i = 1$. Now, note that the optimal value of $S = \sum_j \lambda_j \|p^{(j)}\|_1 \leq \max_j \|p^{(j)}\|_1$. This shows that there always exist a {\em deterministic} behaviour of $\gexcl$ that attains the maximum of $S$. Therefore, the maximum of $S$ for classical theories is the size of the largest stable set of $\gexcl$. This is exactly the independence number of $\gexcl$, denoted by $\alpha(\gexcl)$. So we get the inequality $p_1 + p_2 + \cdots + p_n \leq \alpha(\gexcl)$.

\begin{definition}(Independence number)
 Given a graph $G=(V,E)$, Independence number is the size of the largest subset of vertices $S \subseteq V$ such that no pair of vertices in $S$ are connected. Independence number is denoted by $\alpha(G)$.
\end{definition}

\begin{definition}
 A non-contextuality inequality corresponds to a half-space that contains the set of non-contextual behaviours, that is, 
\be 
\sum_{i \in [n]} p_i \leq \alpha(\gexcl),
\ee
for all $p \in \mathcal{P}_{NC}(\gexcl)$. 
\end{definition}

Interestingly in the quantum setting, one has some additional degrees of freedom to increase this sum. Indeed, let state $u_0$ be a unit vector in a complex Hilbert space $\mathcal{H}$. The event $e_i$ correspond to projecting $u_0$ to a one-dimensional subspace, spanned by a unit vector $u_i \in \mathcal{H}$; the probability that the event occurs is just the squared length of the projection. That is, $p_i = |\braket{u_0}{u_i}|^2$ and $p_1 + p_2 + \cdots + p_n = \sum_{i=1}^n |\braket{u_0}{u_i}|^2$. Now two exclusive events must correspond to projections onto orthogonal vectors, and hence $\braket{u_i}{u_j} = 0$, for all edges $(i,j)$ in $\gexcl$. From Lemma~\ref{startpoint}, $p \in TH(\gexcl)$. Therefore, the optimisation problem we are interested in is 
\be \label{sayma}
\max \sum_i p_i : p \in TH(\gexcl).
\ee
In other words, find a matrix $ X\in \mathbb{S}^{1+n}_+, \text{ with the largest diagonal sum such that } X_{00}=1, \ X_{ii} = X_{0i} \, \forall i \in [n], \ X_{ij}=0, \forall (i,j)\in E\ $. This is precisely the definition of the Lov\'asz theta SDP~\eqref{lovtheta} corresponding to $\gexcl$. The value of this SDP is the famous Lov\'asz theta number $\vartheta(\gexcl)$.
\be\label{lovtheta}
\begin{aligned} 
\vartheta(\gexcl) = \max & \ \sum_{i=1}^n { X}_{ii} \\
{\rm s.t.} & \ { X}_{ii}={ X}_{0i}, \ i\in [n],\\
 & \ { X}_{ij}=0,\ i\sim j,\\
& \ X_{00}=1,\ X\in \mathcal{S}^{n+1}_+.
\end{aligned}
\ee
\noindent Hence we get $p_1 + p_2 + \cdots + p_n \leq \vartheta(\gexcl)$.

\section{Graph-theoretic dimension witnesses}
\label{sec2}


Any Bell or contextuality inequality can be associated to a graph of exclusivity~\cite{CSW}. In this sense, all of them can be studied under the  graph-theoretic framework. While in all previous works one first fixes a (Bell or contextuality) scenario and then looks for dimension witnesses, in this work we investigate the dimension witnesses for graphs (of exclusivity), without fixing a priori any scenario. 


\subsection{Quantum correlations with dimensional restrictions}


In this section we examine from a graph-theoretic perspective the problem of
quantum correlations (aka behaviours) with dimensional restrictions. We use some standard concepts of graph theory and the graph-theoretic approach to correlations introduced in Section~\ref{notation_context}.

\begin{definition}(\textbf{$d$-quantum behaviour for a graph of exclusivity}) A behaviour $p: [n]\to~[0,1]$ corresponding to a graph of exclusivity $\gexcl$, having $n$ vertices, is $d$-quantum if there exists a quantum state $\ket{\psi} \in \mathcal{H}^d$ and non zero projectors $\Pi_1,\ldots, \Pi_n$, belonging to a $d$-dimensional Hilbert space $\mathcal{H}^d$~such that 
\be\label{rankdef}
p_i= \bra{\psi}\Pi_i\ket{\psi},\, \forall i\in [n] \text{ and } \Tr(\Pi_i\Pi_j)=0, \text{ for } i \sim j.
\ee
\end{definition}
We call a quantum realization of the behaviour $p$, the set $\ket{\psi}, \{\Pi_i\}_{i=1}^n \in \mathcal{H}^d$ satisfying \eqref{rankdef}. We denote the set of $d$-quantum behaviours by $\mathcal{P}_{Q}^d(\gexcl)$.

\begin{definition}(\textbf{Orthogonal rank}) The orthogonal rank of a graph $G$, denoted by $R_o(G)$, is the minimum $d$ such that there exists a $d$-dimensional orthonormal representation for $G$. 
\end{definition}

\noindent For example, any orthonormal representation of the $3$-cycle graph of exclusivity must consist of three mutually orthonormal vectors and therefore must be of dimension at least~$3$. Therefore, $R_o(C_3) = 3$. Note that ${\cal P}^d_Q(\gexcl)$ is an empty set for $d < R_o(\gexcl)$. 

Suppose that we are interested in the largest value of the expression $\sum_{i \in [n]} p_i$, as $p$ ranges over the set of $d$-quantum behaviours, that is, the following optimisation problem: 
\be\label{dimbehaviour}
 \max \sum_{i=1}^{n} p_i : p \in \mathcal{P}_{Q}^d(\gexcl).
\ee
Removing the dimensional constraint, the set of quantum behaviours $\mathcal{P}_{Q}(\gexcl)$ becomes the theta body of $\gexcl$, $TH(\gexcl)$ (see Sec.~\ref{notation_context}). As explained in Eq.~(\ref{sayma}), maximizing the $\ell_1$ norm of $p$ over the theta body is equivalently given by the Lov\'asz theta SDP. Therefore, for all $d \geq R_o(\gexcl)$, problem in Eq~\eqref{dimbehaviour} with the dimensional constraint is equivalently expressed by the following rank constrained version of the Lov\'asz theta SDP: 
\be\label{theta:primalrank}
\begin{aligned} 
\vartheta^d(\gexcl) = \max & \ \ \sum_{i=1}^n {X}_{ii} \\
\text{ subject to} & \ \ {X}_{ii}={ X}_{0i}, \ \ 1\le i\le n,\\
& \ \ { X}_{ij}=0, \ \ i\sim j,\\
& \ \ X_{00}=1,\ \ X\in \mathcal{S}^{1+n}_+, \\
& \ \ \text{rank}(X) \leq d.
\end{aligned}
\ee 

 More concretely, using the same arguments as in Lemma~\ref{startpoint}, if $p \in \mathcal{P}_{Q}^d(\gexcl)$ is optimal for \eqref{dimbehaviour} and $ \{\ket{u_i}\bra{u_i}\}_{i=0}^n \in \mathbb{C}^d$ is a quantum realization of $p$ ( where $\ketbra{u_0}$ refers to the quantum state where as $\ketbra{u_i}$ for $1 \leq i \leq n$, refers to the $n$ projectors), then the Gram matrix of the vectors $\ket{u_0},\braket{u_0}{u_1}\ket{u_1},\ldots,\braket{u_0}{u_n}\ket{u_n}$ corresponds to an optimal solution for~\eqref{theta:primalrank} of rank at most~$d$. 
Conversely, for any optimal solution $X={\rm Gram}(\ket{u_0},\ket{u_1},\ldots,\ket{u_n})$, with $u_i \in \mathbb{C}^d$, of the SDP \eqref{theta:primalrank}, 
the realization $\{{\ket{u_i}\bra{u_i} / \|\ket{u_i}\bra{u_i}\|}\}_{i=0}^n$ is optimal for \eqref{dimbehaviour}. The equivalence fails to hold for $d < R_o(\gexcl)$, due to the inverse norm factor in the above line, since $\|u_i\|=0$ for at least one $i$. This is because otherwise $\{u_i/\|u_i\|\}_{i=1}^n$ is a valid orthonormal representation for $\gexcl$ of dimension $d < R_o(\gexcl)$, violating the definition of orthogonal rank. The quantities $\vartheta^1(\gexcl),\vartheta^2(\gexcl), \ldots, \vartheta^{R_o(\gexcl)-1}(\gexcl)$ are still well-defined but they do not seem to have any physical relevance in this context.

\medskip

On the other hand, we are also interested in the minimum dimension in which the Lov\'asz theta bound can be achieved. 
\begin{definition}(\textbf{Lov\'asz rank}) The Lov\'asz rank of a graph $G$,  denoted by $R_L(G)$, is the minimum $d$ for which $\vartheta^d(G) = \vartheta(G)$.
\end{definition}

\noindent By definition, $R_L(G) \geq R_o(G)$. $R_L(G)$ can be sometimes much smaller than the number of vertices of $G$. The following lemma due to Barvinok~\cite{Barvinok1995} gives an upper bound on $R_L(G)$.

\begin{lemma}(\textbf{Barvinok bound}) \label{barvinok}
There exists an optimal solution of $X^*$ of the following SDP 
\be
\begin{aligned}
\max : & \ \ \tr(CX) \\
\mathrm{s.t.} & \, \tr (A_i X) = b_i, \quad \forall i = 1,2,\ldots, m \\ 
& X \succeq 0,
\end{aligned}
\ee
with rank $r$ satisfying the inequality $r(r+1)/2 \leq m$.
\end{lemma}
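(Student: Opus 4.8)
The plan is to prove the Barvinok rank bound via a standard dimension-counting / face-of-the-feasible-cone argument. The feasible region is the intersection of the positive semidefinite cone $\mathcal{S}^N_+$ (here $N = 1+n$, or in full generality the ambient symmetric-matrix size) with the affine subspace cut out by the $m$ linear equations $\tr(A_i X) = b_i$. Since $\tr(CX)$ is a linear functional, if the supremum is attained it is attained on a face of this feasible region; I would pick an optimal $X^*$ lying on a \emph{minimal} face of the feasible set, i.e. an extreme point of the face $\{X \succeq 0 : \tr(A_iX)=b_i,\ \tr(CX) = \mathrm{Opt}\}$.

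\medskip

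\noindent First I would recall the structure of faces of $\mathcal{S}^N_+$: if $X^*$ has rank $r$, then the smallest face of $\mathcal{S}^N_+$ containing $X^*$ is isomorphic to $\mathcal{S}^r_+$, namely $\{Y \succeq 0 : \mathrm{range}(Y) \subseteq \mathrm{range}(X^*)\}$, which after conjugating by an orthonormal basis of $\mathrm{range}(X^*)$ is exactly the cone of $r\times r$ PSD matrices. Second, I would intersect this face with the affine constraints: within the $\binom{r+1}{2}$-dimensional space of $r \times r$ symmetric matrices, the conditions $\tr(A_iX) = b_i$ (and $\tr(CX) = \mathrm{Opt}$) are affine, so the feasible slice inside this face is cut out by at most $m$ affine equations (the objective constraint being implied once we are on the optimal face, or one can absorb it — either way $m$ suffices for the conclusion as stated). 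Third, and this is the crux: if $X^*$ is chosen to be an extreme point of this sliced face but $\binom{r+1}{2} > m$, then the affine constraints leave a positive-dimensional affine subspace $L$ through $X^*$ inside the $r\times r$ symmetric matrices; since the relative interior of $\mathcal{S}^r_+$ is open and $X^*$ (being full-rank $r$ within this face) lies in that relative interior, a small segment of $L$ around $X^*$ stays feasible, contradicting extremality. Hence $\binom{r+1}{2} = r(r+1)/2 \leq m$.

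\medskip

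\noindent The main obstacle — really the only subtlety — is the bookkeeping of \emph{which} linear constraints to count: one must be careful that moving to the optimal face does not secretly add the constraint $\tr(CX) = \mathrm{Opt}$ as an extra equation that would force $m+1$ rather than $m$. The clean way around this is to not pass to the optimal face at all for the counting step: instead, take \emph{any} vertex (extreme point) $X^*$ of the original feasible polyhedral-section $\{X\succeq 0 : \tr(A_iX) = b_i\}$ that is optimal — such a vertex exists whenever the optimum is attained and the feasible set has at least one extreme point — and run the face argument with just the $m$ equations $\tr(A_iX)=b_i$. The perturbation $X^* \pm tH$ for $H$ in the (nonempty, if $r(r+1)/2 > m$) solution space of $\tr(A_iH) = 0$ restricted to $\mathrm{range}(X^*)$ stays feasible for small $t$ and has the same objective up to $O(t)$ — but since $X^*$ is an extreme point of the feasible set, $X^* \pm tH$ feasible forces $H = 0$, the contradiction. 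This gives $r(r+1)/2 \le m$ directly. I would also remark that attainment of the optimum (and existence of an extreme point) is the implicit hypothesis; in the applications to $\vartheta^d$ the feasible set is compact, so this is automatic.
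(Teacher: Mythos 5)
The paper does not actually prove this lemma: it is stated as a known result and attributed to Barvinok's 1995 paper, so there is no internal proof to compare against. Your argument is the standard Pataki--Barvinok extreme-point proof, and it is correct. The key steps are all in place: the minimal face of $\mathcal{S}^N_+$ containing a rank-$r$ point $X^*$ is $\{Y\succeq 0 : \mathrm{range}(Y)\subseteq \mathrm{range}(X^*)\}\cong\mathcal{S}^r_+$, so writing $X^*=VV^{\dagger}$ the perturbations $V(I+tH)V^{\dagger}$ with $H$ symmetric, $\tr(V^{\dagger}A_iVH)=0$, remain feasible for small $|t|$; if $r(r+1)/2>m$ such an $H\neq 0$ exists, contradicting extremality. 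Your resolution of the bookkeeping worry is also the clean one, and in fact slightly overcautious: any extreme point of the optimal face is automatically an extreme point of the whole feasible set (a face of a face is a face), so only the $m$ equations $\tr(A_iX)=b_i$ ever enter the count, and the remark that $X^*\pm tH$ ``has the same objective up to $O(t)$'' is not needed --- two-sided feasibility alone kills extremality. The one hypothesis you correctly flag, attainment together with existence of an extreme point, holds here because the feasible set sits inside the pointed cone $\mathcal{S}^N_+$ (hence is line-free, so every nonempty closed face has an extreme point), and in the paper's application to the Lov\'asz theta SDP the constraints $X_{00}=1$, $X_{ii}=X_{0i}$, $X\succeq 0$ force $X_{ii}\le 1$, so the feasible set is compact and attainment is automatic, as you note.
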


\noindent For the Lov\'asz theta SDP, the number of linear constraints is $m = 1 + |V| + |E|$. Hence $R_L(G) \leq \frac{1}{2} \left(\sqrt{8(|V| + |E|)+9}-1 \right)$. To summarise, we have the following relationships:
\be 
\vartheta^{R_o(\gexcl)}(\gexcl) \leq \vartheta^{R_o(\gexcl)+1}(\gexcl) \leq \cdots \leq \vartheta^{R_L(\gexcl)}(\gexcl) = \vartheta(\gexcl).
\ee
This suggests a way to lower bound the dimension of the underlying quantum system that violates a certain dimension restricted non-contextuality inequality. More formally, a violation of the inequality $\sum_i p_i \leq \vartheta^d(\gexcl)$, where $p \in {\cal P}_Q(\gexcl)$, implies that the underlying quantum system must have dimension at least $d+1$. We shall refer to the operator in such a dimension restricted non-contextuality inequality as a {\cal dimension witness} for dimension $d+1$.  

\medskip

Finally, we note an equivalent way to compute the dimension restricted Lov\'asz theta, which we define as:
\be
\begin{aligned}\label{Prog_2}
\mathcal{\theta}^d(G) = &\max_{\{v_i \in \mathbb{C}^d\}_{i= 1}^n} \lambda_{max}\left(\sum_{i= 1}^n\ketbra{v_i}\right) \\
& \mathrm{s.t.} \; \braket{v_i}{v_i} = 1, \forall i \in [n]\\ 
& \mathrm{and} \; \braket{v_i}{v_j} = 0, i\sim j.
\end{aligned}
\ee

\begin{lemma}\label{Lmax}
$\theta^d(G) = \vartheta^d(G)$.
\end{lemma}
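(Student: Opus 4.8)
The plan is to show the two optimization problems $\theta^d(G)$ (from \eqref{Prog_2}) and $\vartheta^d(G)$ (from \eqref{theta:primalrank}) have the same optimal value by exhibiting a value-preserving correspondence between their feasible points, mirroring the argument already used in Lemma~\ref{startpoint} but carrying the rank/dimension bookkeeping along.

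First I would prove $\theta^d(G) \le \vartheta^d(G)$. Take an optimal orthonormal representation $\{v_i\in\mathbb{C}^d\}_{i=1}^n$ for \eqref{Prog_2}, and let $d_0\in\mathbb{C}^d$ be a unit eigenvector of $\sum_i \ketbra{v_i}$ achieving the top eigenvalue $\lambda_{\max}$, so that $\theta^d(G)=\sum_i |\langle d_0,v_i\rangle|^2$. Now set $u_0 = d_0$ and $u_i = \langle d_0, v_i\rangle\, v_i$ for $i\in[n]$ (the orthogonal projection of $d_0$ onto $v_i$), and let $X = \mathrm{Gram}(u_0,u_1,\dots,u_n)$. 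As in the converse direction of Lemma~\ref{startpoint}, one checks directly that $X_{00}=\langle d_0,d_0\rangle=1$, that $X_{ii}=|\langle d_0,v_i\rangle|^2 = \langle d_0, u_i\rangle = X_{0i}$ using $\|v_i\|=1$, and that $X_{ij}=\langle u_i,u_j\rangle = \overline{\langle d_0,v_i\rangle}\langle d_0,v_j\rangle\langle v_i,v_j\rangle = 0$ whenever $i\sim j$. Crucially, all vectors $u_0,\dots,u_n$ live in $\mathbb{C}^d$, so $\mathrm{rank}(X)\le d$, and $X$ is feasible for \eqref{theta:primalrank}. Its objective value is $\sum_i X_{ii} = \sum_i |\langle d_0,v_i\rangle|^2 = \theta^d(G)$, giving $\theta^d(G)\le\vartheta^d(G)$.

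For the reverse inequality $\vartheta^d(G)\le\theta^d(G)$, take an optimal $X$ for \eqref{theta:primalrank} with $\mathrm{rank}(X)\le d$, and write $X=\mathrm{Gram}(u_0,u_1,\dots,u_n)$ with all $u_i\in\mathbb{C}^d$ (a Gram factorization of a rank-$\le d$ PSD matrix). Exactly as in the forward direction of Lemma~\ref{startpoint}, feasibility forces $\|u_i\|^2 = X_{ii} = X_{0i} = \langle u_0,u_i\rangle$ and $\langle u_0,u_i\rangle = \|u_i\|$, hence $X_{ii} = \|u_i\|^2 = \langle u_0, u_i\rangle^2$. The one genuine wrinkle is that some $u_i$ may be the zero vector, so one cannot immediately normalize; but since $d\ge R_o(G)$ we can replace each zero $u_i$ by an arbitrary unit vector orthogonal to all $u_j$ with $j\sim i$ (such a choice exists in $\mathbb{C}^d$ precisely because $d\ge R_o(G)$), which does not change any relevant inner product with $u_0$ when $X_{ii}=0$ and keeps the collection an orthonormal representation; then set $v_i = u_i/\|u_i\|$. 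These $v_i\in\mathbb{C}^d$ form a feasible point of \eqref{Prog_2}, and $\lambda_{\max}(\sum_i\ketbra{v_i}) \ge \langle u_0, (\sum_i\ketbra{v_i}) u_0\rangle = \sum_i |\langle u_0, v_i\rangle|^2 = \sum_i \langle u_0,u_i\rangle^2 = \sum_i X_{ii} = \vartheta^d(G)$, so $\theta^d(G)\ge\vartheta^d(G)$.

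I expect the main obstacle to be the careful handling of the $d < R_o(G)$ versus $d \ge R_o(G)$ distinction and the zero-vector issue in the Gram factorization — the excerpt already flags that the equivalence "fails to hold for $d < R_o(\gexcl)$" because of the inverse-norm factor, so the statement of Lemma~\ref{Lmax} should be read as applying for $d \ge R_o(G)$ (or one should note both sides are degenerate/ill-defined below $R_o(G)$). Everything else is the same linear-algebra identity as Lemma~\ref{startpoint} with the extra observation that the constructions never leave $\mathbb{C}^d$, so the rank constraint and the dimension constraint match up automatically.
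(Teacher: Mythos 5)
Your proof is correct and takes essentially the same route as the paper's: one direction projects the top eigenvector onto the orthonormal representation to build a feasible rank-$\le d$ Gram matrix, and the other normalizes the Gram vectors of an optimal SDP solution and evaluates the resulting quadratic form at $u_0$, exactly as in Lemma~\ref{startpoint}. You are in fact more careful than the paper about the zero-vector and $d \ge R_o(G)$ caveats (the paper silently divides by $\|v_i\|$); the only soft spot is your claim that a replacement unit vector orthogonal to all neighbours exists ``because $d \ge R_o(G)$'', which as stated is not a proof (it requires that the neighbours' vectors do not span $\mathbb{C}^d$), but this edge case is one the paper's own argument glosses over entirely.
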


\begin{proof}
\textsf{($\geq$ direction)} Let $X$ be a solution of SDP. Let $X = VV^{\dagger}$ and the rows of $V$ be $v_i \in \mathbb{C}^{d}$ for $0\leq i \leq n$. Let $\tilde{v_i} = v_i /\|v_i\|$. Clearly, $\tilde{v_i}$ satisfies the constraints in~(\ref{Prog_2}). Now observe that
\be
\begin{aligned}
  \theta^d(G) \geq & \lambda_{max}\left(\sum_{i=1}^{n} \ketbra{\tilde{v_i}}\right) = \max_{v: \|v\|=1} \sum_{i=1}^n |\braket{v}{\tilde{v_i}}|^2 \\
  &\geq \sum_{i=1}^n |\braket{v_0}{\tilde{v_i}}|^2 = \sum_{i=1}^n |\braket{v_i}{\tilde{v_i}}|^2 = \sum_{i=1}^n \braket{v_i}{v_i} \\
  &= \vartheta^d(G). 
\end{aligned}
\ee

\textsf{($\leq$ direction)} Let $\{v_i \in \mathbb{C}^{d}\}_{i=1}^n$ be a an optimal solution of $\theta^d(G)$ and let $v_0$ be the eigen-vector of $\sum_{i= 1}^n\ketbra{v_i}$ corresponding to the largest eigenvalue. Now construct a $(n+1) \times d$ matrix $V$, with $V_0 = v_0$, the first row of $V$ and $V_i = \braket{v_i}{v_0}v_i$, for all $i \in [n]$. Let $X = VV^\dagger$. Firstly, we note that it satisfies all the constraints of the SDP. Now observe that 
\be
\begin{aligned}
 \vartheta^d(G) & \geq \tr(X) -1 \\ 
 &= \sum_{i=1}^n \braket{v_i}{v_i}|\braket{v_i}{v_0}|^2 \\
  &= \sum_{i=1}^n |\braket{v_i}{v_0}|^2 \\
  &= \lambda_{max} \left(\sum_{i=1}^{n} \ketbra{v_i} \right)  \\
  &= \theta^d(G). 
\end{aligned}
\ee
\end{proof}\qedhere


\subsection{Finding low rank solutions: Heuristic approach}
\label{heuristics}


Unfortunately, \emph{rank-constrained} SDPs are \emph{NP}-hard problems and hence they are computationally intractable. An easy way to see this is that the NP-hard \textsf{Max-Cut} problem with weight matrix $W$ can be expressed as the following rank one restricted SDP: 
\be
\begin{aligned} 
\max \ \ &\frac{1}{2}\Tr (W X) \\
\text{s.t.}\ & {X}_{ii}= 1, \forall i,\\
&X \succeq 0, \\
&\text{rank}(X) = 1.
\end{aligned}
\ee

\noindent Because of this restriction, it seems unlikely that given a non-contextuality inequality and a dimension $d$, one can efficiently compute the value $\vartheta^d(\gexcl)$ and find a quantum realisation of dimension $d$ that achieves the bound. Nevertheless, it is important to find such low dimensional quantum realisations which at least violate the classical bound $\alpha(\gexcl)$. For this purpose, we provide a heuristic technique (algorithm \ref{algo:heuristic}) to compute a lower bound on the $d$ dimensional restricted quantum value and find the corresponding $d$-dimensional quantum realisations. 

\medskip

\begin{algorithm}[H]
 \SetKwInOut{Input}{input}
  \SetKwInOut{Output}{output}

  \Input{Graph $G$ having $n$ nodes, dimension $d$, number of iterations \texttt{k}}
  \Output{A lower bound to $\vartheta^d(G)$}
  \vspace{0.2cm}
Generate a random matrix $W \in \mathbb{R}^{(n+1)\times (n+1)}$\;
\textit{iter} = 1\;
\While{iter $< \texttt{k}$}{
Minimise $\tr((W-I_{n+1})X)$, subject to $X \succeq 0$, $X_{00} = 1$, $X_{ii} = X_{0i}$ for all $i$ and $X_{ij} = 0$ for all $i \sim j$\; 
Obtain optimal $X$ for the above SDP\;
Minimise $\tr(XW)$, subject to $I_{n+1} \succeq W \succeq 0$, $\tr(W) = n+1-d$\;
Obtain optimal $W$ from the above SDP \; 
\textit{iter} = \textit{iter} + 1\;
}
\caption{Heuristics using SDPs.} 
\label{algo:heuristic}
\end{algorithm}

\medskip

The algorithm is adapted from an approach to solving rank constrained problems given in Chapter 4 of ~\cite{dattorro2005convex}. The reference gives a heuristic algorithm for producing low rank solutions to feasibility SDP of the form: 
\vspace{-0.5cm}

\be \label{genranksdp}
\begin{aligned} 
\text{Find} \ \ & G \in \mathcal{S}^N_+ \\
\text{ s.t. } & G \in \mathcal{C}\\
&\text{rank}(G) \leq d,
\end{aligned}
\ee

\noindent where $\mathcal{C}$ is a convex set. Instead of solving this non-convex problem directly, they suggest to solve a couple of SDPs~\eqref{ranksdp1} and \eqref{ranksdp2} iteratively, until the following stopping criteria is met. After a particular iteration, let $G^*$ and $W^*$ be the optimal solution of the SDPs ~\eqref{ranksdp1} and \eqref{ranksdp2} respectively. The loop is stopped if $\langle G^*, W^*\rangle = 0$. Let us see why. Note that the eigenvalues of $W^*$ lie in the closed interval $[0,1]$ and they sum up to $N-d$. This implies that at least $N-d$ of its eigenvalues are non-zero, that is, rank$(W^*) \geq N-d$. This, along with the fact that $\langle G^*, W^*\rangle = 0$, implies that rank$(G^*) \leq d$. Since $G^*$ is a solution of the first SDP, it must also satisfy the conditions $G^* \in \mathcal{C}$ and $G^* \in \mathcal{S}^N_+$. Thus $G^*$ is a solution of SDP~\eqref{genranksdp}. However, note that there is no guarantee that the stopping criteria will be met. 
\begin{multicols}{2}

\be \label{ranksdp1}
\begin{aligned} 
\min_G \ \ &\langle G,W \rangle \\
\text{ s.t. } & G \in \mathcal{C}\\
& G \in \mathcal{S}^N_+.
\end{aligned}
\ee

\be \label{ranksdp2}
\begin{aligned} 
\min_W \ \ &\langle G,W \rangle \\
\text{ s.t. } & \tr(W) = N - d \\
& I_N \succeq W \succeq 0. 
\end{aligned}
\ee
\end{multicols}

In our case, the SDP~\eqref{theta:primalrank} is more general in the sense that it also involves optimising an objective function. Thus we include the objective function of the Lov\'asz theta SDP, $\tr(X)$, as an extra additive term to the objective function of the first SDP~\eqref{ranksdp1}. Besides this, the main idea of Algorithm~\ref{algo:heuristic}, is same as in the feasibility SDP case - to solve two SDPs iteratively. The first SDP tries to satisfy all the Lov\'asz theta SDP constraints, while the second SDP tries to restrict the rank of the solution $X$ to the desired value. The algorithm is made to run for a predefined number of iterations, $\texttt{k}$. In the end of the program, if the final $X$ and $W$ are such that $\langle X,W \rangle = 0$, then the solution $X$ is indeed a feasible solution to SDP~\eqref{theta:primalrank}. If not, we restart the program. We find that this heuristic works pretty well in practice and enables us to find low rank solutions to the Lov\'asz theta SDP. Taking a Gram decomposition of the solution matrix $X$ allows us to compute the $d$ dimensional quantum realisations. 

\medskip

Note that Algorithm~\ref{algo:heuristic} only outputs a lower bound for $\vartheta^d(G)$ and is not directly used to find dimension witnesses (which would require an upper bound). However one may expect to guess this upper bound by running this algorithm several times (by taking the maximum among all the runs). This idea allows us to find candidate graphs for which we can find dimension witnesses and prove the upper bound theoretically. In fact, in Sec.~\ref{sec4Qites}, we describe a family of graphs, which can be used as dimension witnesses, which was found precisely by the same logic using Algorithm~\ref{algo:heuristic}.



\subsection{Examples}


To demonstrate the usefulness of the tools introduced, we apply them to two of graphs which are relevant in the literature on contextuality. For each graph, we report the lower bounds on the rank constrained Lov\'asz theta values for different dimensions obtained with the algorithm introduced before\footnote{A MATLAB implementation of the code using the SDPT3 solver, can be found \href{https://www.dropbox.com/sh/595q05xpo7wfzpd/AAC8jvuprr-C-DTcJccxl6fea?dl=0}{here}.} and discuss why the results are interesting. 


\begin{figure}[H]
	\centering
	\includegraphics[width=0.4\textwidth]{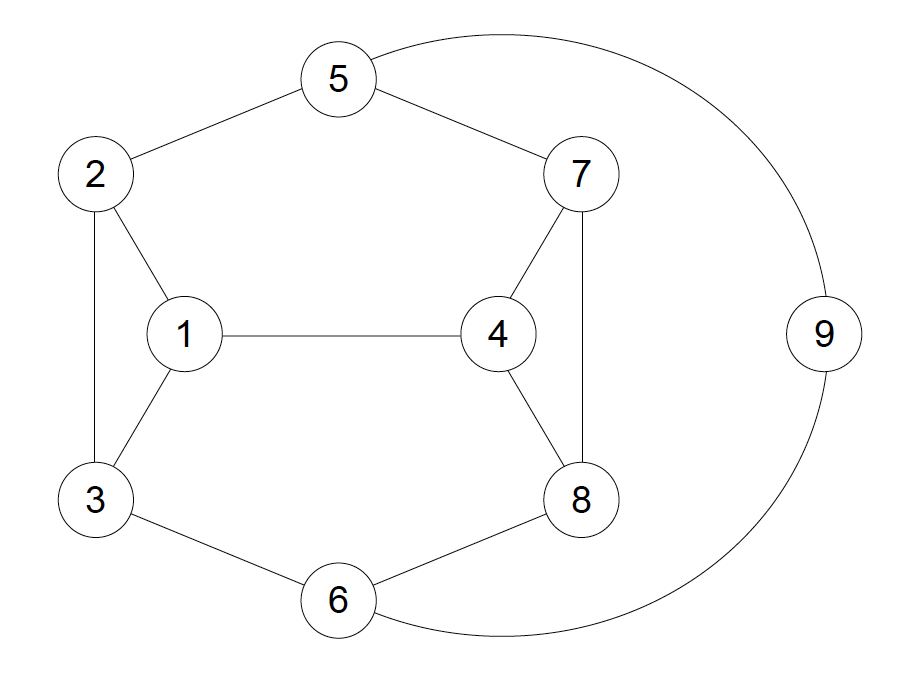}
	\caption{$G_1$ graph: The $9$-vertex graph $G_1$ was used in \cite{Dagomirgraph} to illustrate the notion of almost state-independent contextuality.}
	\label{dagograh}
\end{figure}


\subsubsection{Almost state-independent contextuality}


The earliest proof of state-independent quantum contextuality by
Kochen and Specker \cite{Kochen:1967JMM} required 117 three-dimensional real projective measurements. Since then, the number of projective measurements needed to demonstrate state-independent contextuality has been drastically reduced to thirteen over the years \cite{cabello1996bell, yu2012state}.  The paper by Yu and Oh suggested a test to reveal state-independent contextuality with only thirteen
projectors \cite{yu2012state}. Later, a computer-aided proof confirmed that it is impossible to demonstrate state-independent contextuality
with less than thirteen measurements \cite{cabello2016quantum}. Thus, any test of contextuality with less than thirteen projective measurements
would fail to exhibit contextuality for at least a few quantum states. The $9$-vertex graph $G_1$ in Fig.~\ref{dagograh} is a part of the original proof of the Kochen-Specker theorem \cite{Kochen:1967JMM} and has been used in \cite{Dagomirgraph} to illustrate the concept of ``almost state-independent'' contextuality. The almost state-independent non-contextuality inequality is given by,
\be \label{eq: Dag_ineq}
\sum_{i \in [n]} p_i \leq 3,
\ee
with the events satisfying the exclusivity relation given by the graph in Fig.~\ref{dagograh}. In reference \cite{Dagomirgraph}, authors showed that the non-contextuality inequality in \eqref{eq: Dag_ineq} is saturated by a three dimensional maximally mixed state and violated by every other choice of three-dimensional preparation, for an appropriate choice of measurement settings. Since the non-contextuality inequality in \eqref{eq: Dag_ineq} is violated for every quantum state, except maximally mixed state, it exemplifies the concept of almost state-independent contextuality. For details, refer to \cite{Dagomirgraph}.  As one can see, the non-contextual bound for the aforementioned non-contextuality inequality is given by its independence number, $\alpha(G_1) = 3$ \cite{CSW}. In addition, $R_o(G_1) = 3$ and $R_L(G_1) \leq 4$.  
Our calculations lead to the following results: 
\begin{center}
\begin{tabular}{ |c|c|c| } 
 \hline
$d =$ & $3$ & $4$ \\ 
 \hline
$\vartheta^d(G_1) \geq$ & $3.333$ & $3.4706=\vartheta(G_1)$ \\ 
 \hline
\end{tabular}
\end{center}
The authors of \cite{Kochen:1967JMM,Dagomirgraph} used this graph to illustrate state-independent and almost state-independent in $d=3$, respectively. From numerics, we know that there exists a rank 4 solution which achieves the Lov\'asz theta number and it would be interesting to show that $R_L(G_1) = 4$. Also, numerical evidence suggests that $\vartheta^3(G_1) \leq 3.333$, however we do not have theoretical proof. If we assume $\vartheta^3(G_1) \leq 3.333$, it would mean that any experimental value $> 3.333$ will certify that the underlying dimension is greater than $3$.



\begin{figure}[H]
 \centering
  \includegraphics[width=0.4\textwidth]{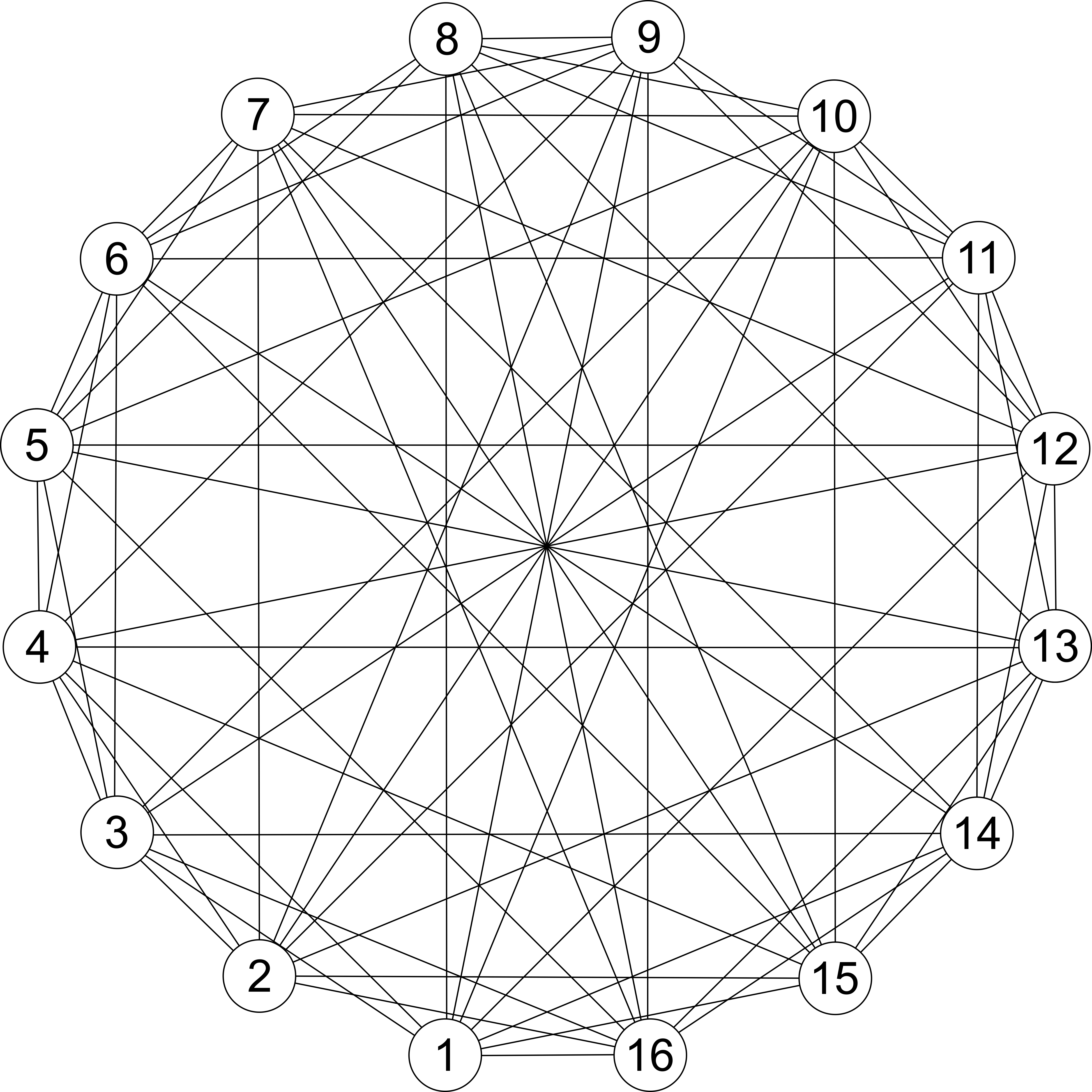}
  \caption{$G_2$ graph: the $16$-vertex graph $G_2$, is the graph of exclusivity corresponding to the $16$ events in the Bell operator of Mermin's tripartite Bell inequality. The aforementioned tripartite Bell inequality can be used to self-test the $3$-qubit GHZ state.}
  \label{mermingrah}
\end{figure}


\subsubsection{Mermin's Bell inequality}


We discuss an $n$-partite Bell inequality (for odd $n\geq 3$ ), known as Mermin's Bell inequality  \cite{Mermin90}, the interest of which is based on the fact that the Bell operator
\begin{equation}
S_n = \frac{1}{2 i} \left[\bigotimes_{j=1}^n (\sigma_x^{(j)}+i \sigma_z^{(j)}) - \bigotimes_{j=1}^n (\sigma_x^{(j)}-i \sigma_z^{(j)})\right],
\end{equation}
where $\sigma_x^{(j)}$ is the Pauli matrix $x$ for qubit $j$, has an eigenstate with eigenvalue $2^{(n-1)}$ . In contrast, for local hidden-variable (LHV) and noncontextual hidden-variable (NCHV) theories,
\begin{equation}
\langle S_n \rangle \overset{\scriptscriptstyle{\mathrm{LHV, NCHV}}}{\le} 2^{(n-1)/2}.
\end{equation}
The aforementioned inequality thus demonstrates the fact that there is no
limit to the amount by which quantum theory can surpass the limitations imposed by local hidden variable theories (or non-contextual hidden variable theories). We are interested in the tripartite case, i.e. for $n=3$,
\begin{equation} \label{eq: mermin_3}
 \langle \sigma_z^{(1)} \otimes \sigma_x^{(2)}  \otimes \sigma_x^{(3)} \rangle + \langle
\sigma_x^{(1)} \otimes \sigma_z^{(2)} \otimes \sigma_x^{(3)}\rangle  +
\langle \sigma_x^{(1)} \otimes \sigma_x^{(2)} \otimes \sigma_z^{(3)} \rangle-
\langle \sigma_z^{(1)} \otimes \sigma_z^{(2)} \otimes \sigma_z^{(3)} \rangle \leq 2.
\end{equation}
The tripartite inequality in \eqref{eq: mermin_3} can be used to self-test a $3$- qubit GHZ state \cite{kaniewski2017self}. One can study the aforementioned inequality via the graph approach introduced  in \cite{CSW}. The $16$-vertex graph $G_2$ in Fig.~\ref{mermingrah} is the graph of exclusivity corresponding to the $16$ events in the Bell operator of Mermin's tripartite Bell inequality~\cite{mermin_cabello}. In this case, $\alpha(G_2) = 3$, $R_o(G_2) = 4$, and $R_L(G_2) \leq 7$. Our calculations give
\begin{center}
\begin{tabular}{ |c|c|c|c|c| } 
 \hline
$d =$ & $4$ & $5$ & $6$ & $7$ \\ 
 \hline
$\vartheta^d(G_2) \geq$ & $3.414$ & $3.436$ & $3.6514$ & $4=\vartheta(G_2)$ \\ 
 \hline
\end{tabular}
\end{center}
Further if we can show that these lower bounds are tight, then one can use these inequalities as dimension witnesses. It is also interesting to note that the Lov\'asz theta can be achieved in $d=7$, since achieving it in the three-party, two-setting, two-outcome Bell scenario requires $3$ qubits and thus $d=2^3=8$.


\subsection{Quantum dimension witnesses for arbitrary dimensions : the family of Qites}
\label{sec4Qites}


It was realised \cite{Kochen:1967JMM} that achieving Kochen-Specker contextuality requires quantum dimension of at least $3$. A simple proof of this is provided in the following Lemma. 
 
\begin{lemma}\label{gleason}
$\vartheta^2(\gexcl) = \alpha(\gexcl)$.
\end{lemma}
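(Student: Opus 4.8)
The plan is to show both inequalities $\vartheta^2(\gexcl) \geq \alpha(\gexcl)$ and $\vartheta^2(\gexcl) \leq \alpha(\gexcl)$, the first being essentially trivial and the second being the heart of the matter. For the easy direction, any deterministic non-contextual behaviour corresponding to a maximum stable set is realizable in dimension $2$ (indeed in any dimension $\geq 1$) by assigning the all-ones state $\ket{u_0}$ and taking $\Pi_i = \ketbra{u_0}$ for $i$ in the stable set and $\Pi_i = 0$ orthogonal — wait, projectors must be nonzero; so instead one realizes the stable-set behaviour with $1$-dimensional projectors on an orthonormal basis, which needs dimension equal to the size of the set. A cleaner route: recall that $\vartheta^d$ is monotone in $d$ and that $\vartheta^d(\gexcl) \geq \alpha(\gexcl)$ for every $d \geq R_o(\gexcl)$ since the classical value is always achievable quantumly; but here we must be careful because $R_o(\gexcl)$ may exceed $2$. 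The correct statement to lean on is that for $d=2$ we are restricting to qubit realizations, so I will prove the two inequalities directly.

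For $\vartheta^2(\gexcl) \geq \alpha(\gexcl)$: by Lemma~\ref{Lmax}, $\vartheta^2(\gexcl) = \theta^2(\gexcl)$, which maximizes $\lambda_{\max}(\sum_i \ketbra{v_i})$ over unit vectors $v_i \in \mathbb{C}^2$ with $\braket{v_i}{v_j}=0$ for $i \sim j$. Take a maximum independent set $S$ with $|S| = \alpha(\gexcl)$, and for every $i \in S$ set $v_i = \ket{0}$; for $i \notin S$ set $v_i$ to be any unit vector consistent with the orthogonality constraints among the non-$S$ vertices — but this may be impossible in dimension $2$. So instead: assign $v_i = \ket{0}$ for $i \in S$ and $v_i = \ket{1}$ for $i \notin S$; the only constraint we must respect is $\braket{v_i}{v_j}=0$ for edges, and an edge between two vertices of $S$ cannot occur ($S$ is independent), while edges within the complement or across are automatically fine only if they are between $\ket 0$ and $\ket 1$ — but two complement vertices joined by an edge would both get $\ket 1$, violating orthogonality. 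This shows the naive construction fails, so the genuine content is that $\theta^2$ with its orthogonality constraints might actually be \emph{smaller} than $\alpha$ for some graphs? No — the inequality $\vartheta^d(\gexcl)\ge \alpha(\gexcl)$ for $d\ge R_o(\gexcl)$ is what the excerpt asserts, and Lemma~\ref{gleason} should be read as applying when $R_o(\gexcl)\le 2$, equivalently when $\gexcl$ is triangle-free (its complement... ); I would state and use that $\vartheta^2$ is only meaningful / the equality only claimed in that regime, and then the lower bound is immediate by picking the $1$-dimensional orthonormal representation supported on a maximum independent set embedded in the qubit space. I expect the intended reading is simply: a $2$-dimensional orthonormal representation forces every edge-neighbourhood to be an orthonormal set in $\mathbb{C}^2$, hence of size $\le 2$, so no triangles, so by a Gleason-type / direct argument the theta body collapses to $STAB$.

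For the upper bound $\vartheta^2(\gexcl) \leq \alpha(\gexcl)$ — the main obstacle — I would argue as follows. Suppose $\ket{\psi}, \Pi_1,\dots,\Pi_n$ is a qubit quantum realization. Each nonzero projector on $\mathbb{C}^2$ is either rank $1$ or the identity; if some $\Pi_i = \id$ then $i$ is isolated (any neighbour $j$ would need $\Tr(\Pi_i\Pi_j)=\Tr(\Pi_j)=0$, impossible), so handle isolated vertices separately by induction. For the rank-$1$ projectors $\Pi_i = \ketbra{u_i}$, the exclusivity condition $\Tr(\Pi_i\Pi_j)=0$ means $\braket{u_i}{u_j}=0$, i.e. $\ket{u_i}$ and $\ket{u_j}$ are antipodal on the Bloch sphere. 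The key geometric fact in $\mathbb{C}^2$: orthogonality is an involution (each line has a \emph{unique} orthogonal line), so the edge relation, restricted to vertices mapped to distinct Bloch points, is a disjoint union of edges and the map $i\mapsto \ket{u_i}$ is constant on the two endpoints of each such pair. Hence the graph induced on the support of the realization is a disjoint union of cliques of size $\le 2$ (i.e.\ triangle-free with a perfect-matching-like structure), and $\sum_i \bra{\psi}\Pi_i\ket{\psi}$ splits across these components. Within one pair $\{i,j\}$, $\bra{\psi}(\ketbra{u_i}+\ketbra{u_j})\ket{\psi} = \bra{\psi}\id\ket{\psi} = 1 = $ the independence number of that pair; within a singleton it is $\le 1$. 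Summing, $\sum_i p_i \le \alpha(\gexcl)$. The one subtlety I would nail down carefully is that the ``$\le 1$ per component'' bound combined with picking one vertex per component actually yields an \emph{independent} set of the right size, i.e.\ that this component decomposition gives $\alpha$ of the induced subgraph equal to the number of components, which is exactly the structure of a disjoint union of $K_1$'s and $K_2$'s; and that vertices outside the support contribute $p_i = 0$ and can be freely added to or omitted from the stable set. I would then invoke Lemma~\ref{startpoint} / Lemma~\ref{Lmax} to phrase everything in terms of orthonormal representations if that is cleaner, but the Bloch-sphere antipodality argument is the conceptual core and where all the work lies.
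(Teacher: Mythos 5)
Your core geometric idea --- that in $\mathbb{C}^2$ every line has a unique orthogonal line, so the exclusivity constraints collapse the available projectors --- is exactly the engine of the paper's proof. But the way you deploy it contains a genuine error. From ``each $\ket{u_i}$ has a unique antipode'' you conclude that the graph induced on the support of the realization is a disjoint union of cliques of size $\le 2$, i.e.\ that every vertex has at most one neighbour. That does not follow and is false: uniqueness of the orthocomplement tells you that all \emph{neighbours} of a vertex $i$ carry the \emph{same} projector $\ketbra{u_i^\perp}$, not that $i$ has only one neighbour. A double star (two adjacent centres $a,b$, each with two pendant leaves) is bipartite, admits a qubit realization, and is connected with vertices of degree $3$; it is not a union of $K_1$'s and $K_2$'s. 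Consequently your accounting ``each pair contributes $p_i+p_j=1$, so the total is the number of components, which equals $\alpha$'' breaks down: in the double star each of the two antipodal projectors is shared by three vertices, the sum is $3(p_u+p_{u^\perp})=3$, and $\alpha=4$. The repair is to group vertices by which of the two antipodal projectors they carry --- within a connected component this is exactly the bipartition --- note that each group is an independent set, and bound the component's contribution by $m\,p_u+m'(1-p_u)\le\max\{m,m'\}\le\alpha$ of that component. This is precisely what the paper does: it reduces to connected bipartite components, observes that all vectors on one side are $e^{i\theta}\ket{v}$ and on the other $e^{i\theta}\ket{v^\perp}$, and bounds $\lambda_{\max}\bigl(|V|\ketbra{v}+|V'|\ketbra{v^\perp}\bigr)=\max\{|V|,|V'|\}\le\alpha(\gexcl)$.

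Second, the lower bound $\vartheta^2(\gexcl)\ge\alpha(\gexcl)$ is never actually established in your write-up: you correctly notice that the naive two-vector assignment fails and then leave the matter hanging with hedges about $R_o(\gexcl)\le 2$. (Also, the condition for a $2$-dimensional orthonormal representation is bipartiteness, not triangle-freeness: $C_5$ is triangle-free yet admits none, which is why the paper's proof speaks of odd cycles.) Under the paper's SDP definition of $\vartheta^d$ the lower bound needs no orthonormal representation of the whole graph: the Gram matrix of the vectors $u_0$ repeated on $\{0\}\cup S$ for a maximum stable set $S$ and $0$ elsewhere is feasible, has rank $1$, and has value $\alpha(\gexcl)$; zero columns are permitted in the SDP even though no nonzero projector corresponds to them. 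Your difficulty is a symptom of working throughout with the all-unit-vector formulation $\theta^2$, whose feasible set is empty for non-bipartite graphs and can be strictly suboptimal even for bipartite ones (the double star again gives $\theta^2=3<4=\alpha$); the paper's upper-bound argument quietly lives in the same formulation, but your version additionally gets the combinatorial structure wrong, and that structural claim is where the proof fails.
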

\begin{proof}

For this proof we use the definition of the restricted Lov\'asz theta number from (\ref{Prog_2}). We need to show that, if we restrict ourselves to $2$ dimensional vectors, then the restricted Lov\'asz theta number is at most the independence number of the graph. Firstly note that if the graph has an odd cycle ($>1$), then it cannot have orthonormal representation in 2 dimensions. Thus we consider only bipartite graphs. Furthermore, assume that $\gexcl$ is connected. If it is not connected, apply the same arguments as follows, to each connected component and then note that the independence number of the graph is the sum of the independence number of its connected components. For a connected bipartite graph its bi-partition is unique and for $\gexcl$, let them be denoted as $V$ and $V'$. The key observation is that for any unit vector $\ket{v}$ in $\mathbb{C}^2$, there exists a unique (up to a unit complex number $e^{i \theta}$) vector $\ket{v^{\perp}}$ that is orthogonal to $\ket{v}$. This implies that if we assign a unit vector $v \in \mathbb{C}^2$ to a vertex in $V$ then all the vectors in $V$ must be of the form $e^{i \theta} \ket{v}$, for some $\theta \in [0,2\pi]$, whereas all vectors in $V'$ must be of the form $e^{i \theta} \ket{v^{\perp}}$. This implies that the cost of the orthonormal representation is at most $\lambda_{\max} \left(\sum_{i \in V} \ketbra{v} + \sum_{i \in V'} \ketbra{v^{\perp}}\right) = \max \{|V|, |V'| \} = \alpha(\gexcl)$. \qedhere
\end{proof}

\medskip

To look for more interesting dimension witnesses for arbitrary higher dimensions we define a family of graphs parameterised by integers $k \geq 2$, called \emph{k-Qite}\footnote{The reason for is that they resemble kites. However the name kite is already reserved for another family of graphs.}. 
\begin{definition}
A $k$-Qite graph has $2k+1$ vertices, $v_1,v_2,\ldots, v_{2k+1}$, with the first $k$ vertices forming a fully connected graph. Vertex $v_i$ is connected to vertex $v_{i+k}$, for all $1\leq i \leq k$. Vertex $v_{2k+1}$ is connected to vertices $v_{k+i}$, for all $1\leq i \leq k$. 
\end{definition}

\noindent Note that the first member of the family, that is $k=2$, is just the $C_5$ graph (see Fig.~\ref{fig:2qite}). This is one of the most well studied graphs in the field of contextuality since it is the smallest graph for which the Lov\'asz theta number is strictly greater than the independence number. The corresponding non-contextuality inequality is the famous {\em KCBS} inequality~\cite{KCBS}. The graph corresponding to $k=3$ is shown in Fig.~\ref{fig:3qite}. 


\begin{figure}
  \centering
  \begin{minipage}{0.45\textwidth}
    \centering
    \includegraphics[width=0.75\textwidth]{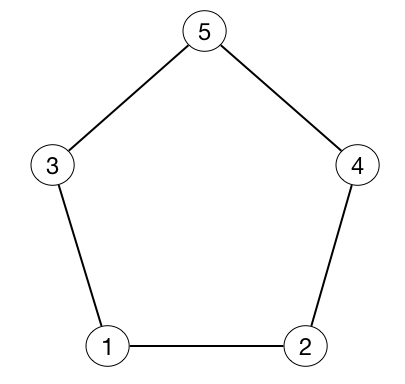} 
    \caption{$2$-Qite $\equiv$ $C_5$, where \hspace{2cm} $\alpha(C_5) = 2, \vartheta(C_5) = \sqrt{5} \approx 2.2361$}.
    \label{fig:2qite}
  \end{minipage}\hfill
  \begin{minipage}{0.45\textwidth}
    \centering
    \includegraphics[width=0.9\textwidth]{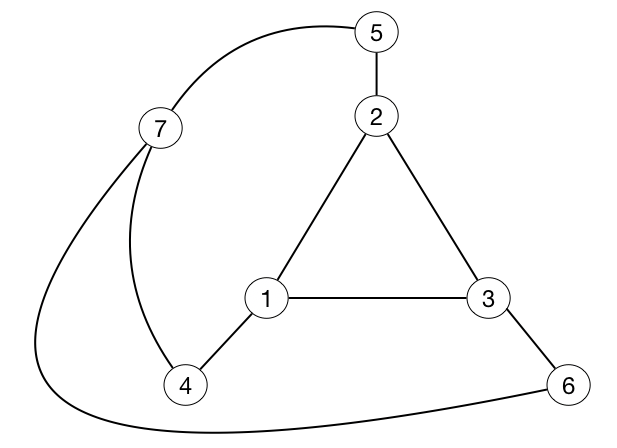} 
    \caption{$3$-Qite, where $\alpha(3\text{-Qite}) = 3, \vartheta(3\text{-Qite}) \approx 3.0642$}.
    \label{fig:3qite}
  \end{minipage}
\end{figure}


\begin{lemma}\label{qiteind}
The independence number of the $k$-Qite graph is $k$.
\end{lemma}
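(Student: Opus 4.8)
Proof proposal for Lemma \ref{qiteind} (the independence number of the $k$-Qite graph is $k$).

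The plan is to exhibit an independent set of size $k$ and then show no independent set can be larger. Recall the vertex set is $\{v_1,\dots,v_{2k+1}\}$ where $\{v_1,\dots,v_k\}$ is a clique, $v_i \sim v_{i+k}$ for $1\le i\le k$, and $v_{2k+1}\sim v_{k+i}$ for $1\le i\le k$.

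For the lower bound, I would simply check that $\{v_{k+1},v_{k+2},\dots,v_{2k}\}$ — the $k$ "middle" vertices — is an independent set: no two of them share an edge (the only edges incident to $v_{k+i}$ go to $v_i$ in the clique and to $v_{2k+1}$), so $\alpha(k\text{-Qite})\ge k$.

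For the upper bound, let $S$ be any independent set. Since $\{v_1,\dots,v_k\}$ is a clique, $S$ contains at most one vertex among $v_1,\dots,v_k$. I would split into two cases. If $S$ contains no vertex from the clique, then $S\subseteq\{v_{k+1},\dots,v_{2k},v_{2k+1}\}$; but $v_{2k+1}$ is adjacent to every $v_{k+i}$, so $S$ either equals a subset of the $k$ middle vertices (size $\le k$) or contains $v_{2k+1}$ and then nothing else from that block, giving $|S|=1\le k$. If $S$ contains exactly one clique vertex, say $v_j$ with $1\le j\le k$, then $v_{k+j}\notin S$ (since $v_j\sim v_{k+j}$), so from the middle block $S$ can use at most the $k-1$ vertices $\{v_{k+i}: i\ne j\}$, plus possibly $v_{2k+1}$ — but $v_{2k+1}$ is adjacent to all of those, so including $v_{2k+1}$ forces $S\subseteq\{v_j,v_{2k+1}\}$ of size $2\le k$, while excluding it gives $|S|\le 1+(k-1)=k$. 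In all cases $|S|\le k$, so $\alpha(k\text{-Qite})=k$.

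I do not expect any serious obstacle here; this is a routine two-sided counting argument and the only thing to be careful about is the small-$k$ edge cases (for $k=2$ the graph is $C_5$ and $\alpha=2$, consistent with the formula), and making sure the case split over "how many clique vertices are in $S$" together with the role of the apex vertex $v_{2k+1}$ is exhaustive. The mild subtlety is that $v_{2k+1}$ dominates the entire middle block, so it can never help build a large independent set; once that is observed the bound falls out immediately.
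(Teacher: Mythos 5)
Your proof is correct and follows essentially the same route as the paper's: partition the vertices into the clique $\{v_1,\dots,v_k\}$, the middle block $\{v_{k+1},\dots,v_{2k}\}$, and the apex $v_{2k+1}$, take the middle block as the independent set of size $k$, and observe that the clique contributes at most one vertex while the apex excludes the whole middle block. Your explicit case split is, if anything, slightly more complete than the paper's brief justification of the upper bound.
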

\begin{proof}
Partition the set of the vertices into three sets: $S_1 = \{v_1,v_2,\ldots, v_{k}\}$, $S_2 = \{v_{k+1},v_{k+2},\newline \ldots, v_{2k}\}$ and $S_3 = \{v_{2k+1}\}$. Firstly note that since none of the vertices in $S_2$ are connected to each other, the independence number is at least $|S_2| = k$. Since every vertex in $S_1$ is connected to each other, there can be at most one vertex from $S_1$ in a maximal independent set. However, the inclusion of a vertex from $S_1$, say $v_i$ in the maximal independent set would imply the vertex $v_{k+i}$ cannot be included simultaneously in the maximal independent set. Similarly inclusion of $v_{2k+1}$ implies that one cannot have any vertex of $S_2$ in the maximal independent set. Hence the lemma follows.  
\end{proof} 

\begin{theorem}
$R_o(\emph{k-}Qite) = k$, for all $k \geq 3$.
\end{theorem}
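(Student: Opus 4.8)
The plan is to establish the claim $R_o(k\text{-Qite}) = k$ for $k \geq 3$ by proving two inequalities: $R_o(k\text{-Qite}) \geq k$ and $R_o(k\text{-Qite}) \leq k$. The lower bound is immediate: the $k$-Qite graph contains $K_k$ as a subgraph (the first $k$ vertices $v_1,\ldots,v_k$ form a clique by definition), and any orthonormal representation must assign pairwise orthogonal unit vectors to the vertices of a clique, so at least $k$ dimensions are needed. Thus $R_o(k\text{-Qite}) \geq R_o(K_k) = k$. (One should also double-check that $R_o \geq R_o$ is monotone under taking subgraphs, which is trivially true since restricting an orthonormal representation to a subset of vertices yields one for the induced subgraph.)

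The substantive part is the upper bound: I would exhibit an explicit $k$-dimensional orthonormal representation. The natural first attempt is to work in $\mathbb{C}^k$ (or $\mathbb{R}^k$) with the standard basis $\{e_1,\ldots,e_k\}$. Assign $v_i \mapsto e_i$ for $1 \leq i \leq k$ (this handles the clique constraints among the first $k$ vertices automatically). For the "outer" vertices $v_{k+1},\ldots,v_{2k}$, the constraint is that $v_{k+i} \perp v_i = e_i$, so $v_{k+i}$ must lie in $\mathrm{span}\{e_j : j \neq i\}$; there are no edges among $v_{k+1},\ldots,v_{2k}$ themselves (by Lemma~\ref{qiteind}'s partition, $S_2$ is independent), so those are the only constraints on them except the ones coming from $v_{2k+1}$. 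Finally $v_{2k+1}$ must be orthogonal to all of $v_{k+1},\ldots,v_{2k}$. The cleanest choice I would try: let $v_{2k+1} = \frac{1}{\sqrt{k}}(e_1 + e_2 + \cdots + e_k)$, the "uniform" vector, and then for each $i$ choose $v_{k+i}$ to be a unit vector in the $(k-1)$-dimensional space $e_i^{\perp}$ that is also orthogonal to $v_{2k+1}$ — i.e., $v_{k+i} \in e_i^{\perp} \cap v_{2k+1}^{\perp}$. Since $e_i \not\parallel v_{2k+1}$ for $k \geq 2$, this intersection is a genuine $(k-2)$-dimensional subspace, which is nonzero precisely when $k \geq 3$ — and here is where the hypothesis $k \geq 3$ enters, matching the statement exactly. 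For instance one can take $v_{k+i} \propto \sum_{j \neq i}(\text{suitable coefficients})\, e_j$ orthogonal to the all-ones vector restricted to coordinates $\neq i$; explicitly $v_{k+i} \propto (k-1)e_{i'} - \sum_{j \neq i, j \neq i'} e_j$ for a fixed choice $i' \neq i$ works, or more symmetrically one picks any unit vector in that $(k-2)$-space.

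I expect the main obstacle to be purely bookkeeping: verifying that this assignment satisfies \emph{exactly} the edge set of the $k$-Qite and creates no spurious orthogonalities that would be fine, but more importantly no \emph{missing} ones — i.e., one must confirm $\langle v_{k+i}, v_i \rangle = 0$, $\langle v_{k+i}, v_{2k+1}\rangle = 0$ for all $i$, and that all vectors are unit, while the non-edges (e.g. $v_i \not\sim v_{2k+1}$, $v_{k+i} \not\sim v_{k+j}$) impose no constraint so need no checking. Since orthogonal rank only requires orthogonality \emph{on} edges (non-edges are unconstrained), there is no risk from "extra" orthogonalities, so the verification reduces to a handful of inner-product computations. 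The one place to be careful is ensuring the $v_{k+i}$ can be chosen as honest unit vectors, which comes down to the dimension count $\dim(e_i^{\perp} \cap v_{2k+1}^{\perp}) = k - 2 \geq 1 \iff k \geq 3$; I would state this cleanly and note it is exactly why the theorem is restricted to $k \geq 3$ (for $k=2$, $C_5$ has $R_o(C_5) = 3 > 2$, consistent with the formula failing).
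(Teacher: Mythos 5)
Your proof is correct and its overall skeleton (clique lower bound; standard basis on $v_1,\dots,v_k$; normalized all-ones vector on $v_{2k+1}$; then fill in $v_{k+1},\dots,v_{2k}$) matches the paper's, but the way you complete the upper bound is genuinely different and cleaner. The paper constructs the vectors for $S_2$ explicitly by a sign trick: it gives the entries of $v_{k+i}$ equal magnitudes and flips half the signs to kill the inner product with the all-ones vector, which forces a case split on the parity of $k$ (and, for even $k$, a modified choice of $v_{2k+1}$ and a special vector for $v_{k+1}$). You instead observe that $v_{k+i}$ only needs to be \emph{some} unit vector in $e_i^{\perp}\cap v_{2k+1}^{\perp}$, a subspace of dimension exactly $k-2$, which is nonzero precisely when $k\geq 3$; this handles all $k$ uniformly, avoids the parity analysis entirely, and makes transparent why the hypothesis $k\geq 3$ is needed (for $k=2$ the intersection is trivial, consistent with $R_o(C_5)=3$). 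What the paper's approach buys in exchange is a fully explicit representation with concrete coordinates, which is convenient if one wants to actually implement the witness. One small correction: your sample explicit vector $(k-1)e_{i'}-\sum_{j\neq i,i'}e_j$ has coordinate sum $(k-1)-(k-2)=1$, so it is not orthogonal to the all-ones vector; the coefficient should be $k-2$, i.e.\ $v_{k+i}\propto (k-2)e_{i'}-\sum_{j\neq i,\,i'}e_j$. This does not affect the argument, since your primary justification is the dimension count, which is sound.
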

\begin{proof}
Consider the vertex partitioning as in Lemma~\ref{qiteind}. Since vertices in $S_1$ form a $k$-complete graph, we have $R_o(\text{k-}Qite) \geq k$. Now we show that there exists an orthonormal representation in dimension $k$ for all $\text{k-}Qite$ graphs with $k \geq 3$. Depending of the parity of $k$, we give an explicit construction for the orthonormal representation. \newline
\textbf{When $k$ is odd:} For all the vertices in $S_1$, assign the standard vectors $e_i$ in a $k$-dimensional Hilbert space to vertex $v_i$, for $i \in [$k$]$. Assign the vector $\frac{1}{\sqrt{k}}(1,1,\ldots,1)$ to vertex $v_{2k+1}$. Now consider the vertices $v_{k+i}$ in $S_2$, for $i \in [$k$]$. For vertex $v_{k+i}$ to be orthogonal to vertex $v_{i}$, the vector for $v_{k+i}$ must have $0$ in the $i^{th}$ position. Let the magnitude of the remaining entries of the vector be $\frac{1}{\sqrt{k}}$. Since $k$ is odd, the number of entries with non-zero (also equal) magnitude is even. Setting, half of them randomly to negative sign, makes it orthogonal to the vector $v_{2k+1}$. Hence, in this case, all orthonormality constraints are satisfied. 
\newline
\noindent \textbf{When $k$ is even:} Assign the vectors to all the vertices in $S_1$ in the same way as in the odd $k$ case. Set the vector corresponding to vertex $v_{2k+1}$ as $\frac{1}{\sqrt{k-1}}(0,1,1,\ldots,1)$. Except vertex $v_{k+1}$, set all the rest of the vertex in $S_2$ in the same way as in the odd $k$ case. Note that this establishes orthogonality of vertex $v_{k+i}$ with $v_{2k+1}$ for all $2 \leq i \leq k$. Vertex $v_{k+1}$ is then set such that its first entry is $0$ (to make it orthogonal to $v_1$) and is orthogonal to $v_{2k+1}$. There are many such vectors which would satisfy these conditions. For example, set $v_{k+1}$ as $\frac{1}{\sqrt{(k-2)(k-1)}}(0,1,1,\ldots,1,2-k)$ to conclude the proof.
\end{proof}

In order to propose dimension witnesses, we want to find upper bounds on the dimension restricted Lov\'asz theta number corresponding to the {\em Qite} family. For $k=2$, Lemma~\ref{gleason} already gives us the required bound of $2$. We now generalise the Lemma for the {\em Qite} family. 

\begin{theorem}
$\vartheta^k(\emph{k-}Qite) \leq k$, for all $k \geq 2$.
\end{theorem}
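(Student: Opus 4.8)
The plan is to use the equivalent formulation of the rank-restricted Lov\'asz theta from \eqref{Prog_2} and Lemma~\ref{Lmax}, so that it suffices to bound $\lambda_{\max}\!\left(\sum_{i=1}^{2k+1}\ketbra{v_i}\right)$ over all orthonormal representations $\{v_i\}_{i=1}^{2k+1}\subset\mathbb{C}^k$ of the $k$-Qite graph. Using the vertex labelling from Lemma~\ref{qiteind}, the first $k$ vectors $v_1,\dots,v_k$ index a $k$-clique, hence are pairwise orthogonal unit vectors in $\mathbb{C}^k$ and thus form an orthonormal basis; since every quantity in sight is unitarily invariant, I would apply a global unitary and assume $v_i=e_i$. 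Write $u_i:=v_{k+i}$ for $1\le i\le k$ and $w:=v_{2k+1}$. Reading off the edges of the $k$-Qite graph, the only constraints on these are $\langle u_i,e_i\rangle=0$ (from $v_i\sim v_{i+k}$) and $\langle u_i,w\rangle=0$ (from $v_{2k+1}\sim v_{k+i}$); in particular there are no constraints among the $u_i$ or between $w$ and the $e_i$.

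Next I would note that $\sum_{i=1}^k\ketbra{e_i}=I_k$, so the matrix to be bounded is $M=I_k+N$ with $N:=\sum_{i=1}^k\ketbra{u_i}+\ketbra{w}$, and $\lambda_{\max}(M)=1+\lambda_{\max}(N)$; it therefore remains to show $\lambda_{\max}(N)\le k-1$. Fix a unit vector $x\in\mathbb{C}^k$ and decompose it as $x=\alpha w+y$ with $y\perp w$, so $|\alpha|^2+\|y\|^2=1$. Since $\langle u_i,w\rangle=0$ we have $\langle x,u_i\rangle=\langle y,u_i\rangle$, while $|\langle x,w\rangle|^2=|\alpha|^2$, giving $\langle x|N|x\rangle=\sum_{i=1}^k|\langle y,u_i\rangle|^2+|\alpha|^2$.

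The crux is a one-line Cauchy--Schwarz estimate exploiting $u_i\perp e_i$: the $i$-th coordinate of the unit vector $u_i$ vanishes, so $|\langle y,u_i\rangle|^2\le\sum_{j\ne i}|y_j|^2=\|y\|^2-|y_i|^2$; summing over $i$ yields $\sum_{i=1}^k|\langle y,u_i\rangle|^2\le k\|y\|^2-\|y\|^2=(k-1)\|y\|^2$. Combining, $\langle x|N|x\rangle\le (k-1)\|y\|^2+|\alpha|^2\le (k-1)(\|y\|^2+|\alpha|^2)=k-1$, where the final inequality is where $k\ge 2$ is used. Hence $\lambda_{\max}(N)\le k-1$ and $\vartheta^k(k\text{-Qite})=\lambda_{\max}(M)\le k$.

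The construction is short once the right formulation is in hand; the step I expect to be the only real pitfall is the bookkeeping in the last paragraph, namely that one must use \emph{both} the clique-induced orthogonality $u_i\perp e_i$ and the ``hub'' orthogonality $u_i\perp w$ at the same time. Using only $u_i\perp e_i$ gives the naive bound $\langle x|N|x\rangle\le(k-1)\|x\|^2+\|x\|^2=k$, hence only $\vartheta^k\le k+1$; it is precisely the split $x=\alpha w+y$ that isolates the $|\alpha|^2$ term and lets it be absorbed into the factor $k-1$ when $k\ge2$, recovering the tight value $k$.
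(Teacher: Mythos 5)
Your proof is correct, and it shares the paper's overall skeleton: pass to the $\theta^d$ formulation of Lemma~\ref{Lmax}, use the $k$-clique to fix $v_i=e_i$ up to a global unitary, note that these contribute the identity, and reduce to showing that the remaining $k+1$ projectors have largest eigenvalue at most $k-1$. The two arguments diverge in how that last bound is obtained. The paper first splits off the hub vector $w=v_{2k+1}$ by observing that it is orthogonal to every $u_i=v_{k+i}$, so that $\lambda_{\max}\bigl(\sum_{i=k+1}^{2k+1}\ketbra{v_i}\bigr)\leq\max\bigl\{\lambda_{\max}\bigl(\sum_{i=k+1}^{2k}\ketbra{v_i}\bigr),1\bigr\}$, and then bounds the residual sum by stacking the $u_i$ as rows of a matrix and invoking Gershgorin's circle theorem together with $\|\cdot\|_1\leq\sqrt{k-1}\,\|\cdot\|_2$. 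You instead bound the quadratic form directly: writing the test vector as $x=\alpha w+y$ with $y\perp w$, Cauchy--Schwarz over the $k-1$ possibly nonzero coordinates of each $u_i$ gives $\sum_i|\langle y,u_i\rangle|^2\leq(k-1)\|y\|^2$, and the hub contribution $|\alpha|^2$ is absorbed using $k\geq2$. Your route is more elementary and in fact more watertight: the paper's chain identifies $\max_{\|x\|=1}\|Mx\|_2$ with $|\lambda_{\max}(M)|$, i.e.\ the operator norm with the spectral radius of a non-normal matrix, which is not valid in general since Gershgorin controls eigenvalues rather than singular values; your direct estimate on $\|Mx\|_2^2=\sum_i|\langle u_i,x\rangle|^2$ is precisely what is needed to make that step rigorous. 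One cosmetic caveat: for $k=2$ the feasibility set is empty (no unit $w\in\mathbb{C}^2$ can be orthogonal to both $u_1$ and $u_2$, consistent with $R_o(C_5)=3$), so the $k=2$ case holds vacuously in your formulation, whereas the paper covers it separately via Lemma~\ref{gleason}.
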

\begin{proof}
We use the $\theta^d(G)$ definition of rank restricted Lov\'asz theta for the proof, see Lemma~\ref{Lmax}. $\vartheta^k(\emph{k-}Qite) = \max_{\{v_i\}} \lambda_{max} \left(\sum_{i=1}^{2k+1} \ketbra{v_i} \right)$, where $\ket{v_i} \in \mathbb{C}^k$ is a $k$-dimensional quantum state corresponding to the vertex $v_i$, such that $\braket{v_i}{v_j} = 0$, whenever vertices $v_i$ and $v_j$ share an edge. Since, the first $k$ vectors must form an orthogonal basis (as they form a $k$-complete graph), one can suppose that $\ket{v_i} = e_i$ (the standard basis vector), for $1\leq i \leq k$, without loss of generality. This is because there will always exist a unitary $U$, that can rotate any orthonormal basis to the standard basis. Note that this unitary rotation on all the vertices, gives us another set of orthonormal representation of the graph with the same cost, that is, 
\be
\begin{aligned}
  \lambda_{max}\left(\sum_{i=1}^{2k+1} \ketbra{v_i}\right) &= \lambda_{max}\left(U\left(\sum_{i=1}^{2k+1} \ketbra{v_i}\right)U^{\dagger}\right) = \lambda_{max}\left(\sum_{i=1}^{2k+1} U\ketbra{v_i}U^{\dagger}\right). 
\end{aligned}
\ee

\noindent Since $ \sum_{i=1}^{k} \ketbra{v_i} = \mathbb{I}$, we are required to show that $\lambda_{max}\left(\sum_{i=k+1}^{2k+1} \ketbra{v_i}\right) \leq k-1$. Note that setting the first $k$ vectors to the standard basis vectors also implies that the $i^{th}$ component of $\ket{v_{k+i}}$ is $0$, for $1 \leq i \leq k$. Next, observe that $\ket{v_{2k+1}}$ is orthogonal to $\ket{v_{k+i}}_{i=1}^k$ and so $\lambda_{max}\left(\sum_{i=k+1}^{2k+1} \ketbra{v_i}\right) \leq \max \{\lambda_{max}\left(\sum_{i=k+1}^{2k} \ketbra{v_i}\right), 1\}$. Hence it suffices to show that $\lambda_{max}\left(\sum_{i=k+1}^{2k} \ketbra{v_i}\right) \leq k-1$.

Let $M \in \mathbb{C}^{k \times k}$ be the matrix whose $i^{th}$ row is $\ket{v_{k+i}}^{\T}$, for $i \in [k]$. Note that $M^{\dagger}M = \sum_{i=k+1}^{2k} \ketbra{v_i}$. Also, observe that $M$ has the property that it's diagonal is all zero and it's rows are all normalized to 1 in $\ell_2$-norm. We shall now bound the largest eigenvalue of $M^{\dagger}M$. We make use of Gershgorin's circle theorem which states that given a complex square matrix $A \in \mathbb{C}^{n \times n}$, it's eigenvalues (which may be complex) lie within at least one of the $n$ Gershgorin discs, that is a closed disk in the complex plane centered at $A_{ii}$ with radius given by the row sum $r_i = \sum_{j\neq i } |A_{ij}|$ for $1 \leq i \leq n$. Since $M_{ii} = 0$ for all $i$, 
\be\max_{x: \|x\|=1}\|Mx\|_2 = |\lambda_{max}(M)| \leq \max_{k+1\leq i \leq 2k} \, \|\ket{v_{i}}\|_1 \leq \sqrt{k-1} \max_{k+1\leq i \leq 2k} \, \|\ket{v_{i}}\|_2 = \sqrt{k-1}, 
\ee
where the second inequality follows from the fact that the $\ell_1$-norm of a vector $v$ is at most $\sqrt{\dim(v)}$ times it's $\ell_2$-norm. Finally putting everything together, 
\be
\lambda_{max}(M^{\dagger}M) = \max_{x: \|x\|=1} x^{\dagger}M^{\dagger}Mx = \max_{x:\|x\|=1} \|Mx\|_2^2 \leq (\sqrt{k-1})^2 = k-1.
\ee
\end{proof}

On the other hand, one can verify that $\vartheta(\emph{k-}Qite) > k$, for any $k > 1$, by solving the Lov\'asz theta SDP for the $\emph{k-}Qite$ graph numerically. This gives us the following corollary.  

\begin{corollary}
Violating the non-contextuality inequality $\sum_i p_i \leq k$ where $p \in {\cal P}_{Q}(\emph{k-}Qite)$, implies that the underlying quantum realisation must have dimension at least $k+1$. 
\end{corollary}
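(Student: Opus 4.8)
The plan is to obtain the corollary directly from the two theorems just proved, together with the bookkeeping facts about rank-restricted behaviours collected in \cref{sec2}. The two ingredients are: (i) the theorem $\vartheta^k(k\text{-Qite}) \le k$, and (ii) the theorem $R_o(k\text{-Qite}) = k$, which (by the remark following the definition of orthogonal rank) tells us that $\mathcal{P}_Q^d(k\text{-Qite}) = \emptyset$ whenever $d < k$. I would also note, as remarked just before the statement, that $\vartheta(k\text{-Qite}) > k$, so the inequality $\sum_i p_i \le k$ is genuinely violable within $\mathcal{P}_Q(k\text{-Qite})$ and the corollary is not vacuous.

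The argument itself is a short contradiction. Suppose $p \in \mathcal{P}_Q(k\text{-Qite})$ satisfies $\sum_{i} p_i > k$ and admits a quantum realisation in a Hilbert space of dimension $d \le k$, i.e. $p \in \mathcal{P}_Q^d(k\text{-Qite})$. If $d < k$, then $\mathcal{P}_Q^d(k\text{-Qite})$ is empty, a contradiction (equivalently, normalising the projected vectors as in the discussion following \eqref{theta:primalrank} would yield an orthonormal representation of $k$-Qite of dimension below its orthogonal rank). If $d = k$, then by the definition of the rank-constrained Lov\'asz theta we have $\sum_i p_i \le \vartheta^k(k\text{-Qite})$, and the theorem $\vartheta^k(k\text{-Qite}) \le k$ forces $\sum_i p_i \le k$, again contradicting the hypothesis. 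Hence no realisation of dimension at most $k$ can exist, so any quantum realisation of $p$ has dimension at least $k+1$.

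Since this is essentially a reduction, I do not anticipate a substantive obstacle; the only subtlety is to handle the range $d < R_o(k\text{-Qite})$ by the emptiness of $\mathcal{P}_Q^d$ rather than by invoking a value $\vartheta^d(k\text{-Qite}) \le k$, as those lower-rank theta values carry no operational meaning in this setting. An equivalent and perhaps cleaner presentation is to record the single chain $\sup\{\sum_i p_i : p \in \mathcal{P}_Q^d(k\text{-Qite})\} \le k < \vartheta(k\text{-Qite})$, valid for every $d \le k$, from which the dimension lower bound $k+1$ is read off as the threshold that a measured value must cross.
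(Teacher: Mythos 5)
Your proof is correct and is essentially the argument the paper intends: the corollary is stated as an immediate consequence of the two theorems ($R_o(k\text{-Qite})=k$ and $\vartheta^k(k\text{-Qite})\le k$) together with the numerically verified fact that $\vartheta(k\text{-Qite})>k$. Your explicit case split, handling $d<k$ via emptiness of $\mathcal{P}_Q^d$ rather than via the operationally meaningless low-rank theta values, is exactly the right way to fill in the details the paper leaves implicit.
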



\section{Conclusion}
\label{disc}


In this work, we have introduced a novel approach to quantum dimension witnessing in scenarios with one preparation and several measurements (examples of them are Kochen-Specker contextuality and Bell nonlocality scenarios). Our approach is based on graphs which represent the relations of exclusivity between events. Each graph can be realized in different scenarios, and there is always a (specific Kochen-Specker contextuality) scenario for which all quantum behaviours for the graph can be realized. The virtue of our approach is precisely that we do not need to fix any scenario. Instead, we explore the features of abstract graphs for dimension witnessing. Here, we have introduced all the necessary tools to identify graph-based dimension witnesses, and we have illustrated their usefulness by showing how famous exclusivity graphs in quantum theory hide some surprises when re-examined with our tools and how one can construct simple dimension witnesses for any arbitrary dimension. Arguably, however, the main interest of our results is that they can be extended in many directions, connected to multiple problems, and applied to in different ways. Here we list some of possible future lines of research: 
\begin{itemize}
	\item Identifying graph-theoretic dimension witnesses for specific Bell and Kochen-Specker contextuality scenarios.
	\item Using previous knowledge in graph theory for finding useful quantum dimension witnesses. For example, there are graphs for which the ratio of Lov\'asz theta number to independence number is quite large, i.e., $\frac{\vartheta(G)}{\alpha(G)} \gg 1$ \cite{Feige1997RandomizedGP, amaral2015maxcontext}. This indicates situations where the quantum vs classical advantage is highly robust against imperfections. Therefore, dimension witnesses based on such graphs could be useful for certification tasks on, e.g., noisy intermediate-scale quantum devices \cite{preskill2018quantum}. 
	
	\item For the purpose of noise robust dimension witnesses, one may also use a weighted version of graphs (corresponding to a weighted non-contextuality inequality). As an example, for our family of  $\emph{k-}Qite$ graphs, one can consider a weight vector given by $w=(1,1,\ldots,1,k-1)$, where more weight is given to the $(2k+1)^{th}$ vertex of $\emph{k-}Qite$. Note that the weighted independence number of this weighted graph is still $k$. However numerically solving the weighted Lov\'asz theta for this graph suggests $\vartheta(\emph{k-}Qite,w) - \alpha(\emph{k-}Qite,w)> 0.26$ for all $k \geq 3$. For large $k$ this difference converges to $\approx 1/3$. However note that since for large $k$, the ratio $\frac{\vartheta(\emph{k-}Qite,w)}{\alpha(\emph{k-}Qite,w)} \approx 1$, this approach is still not noise robust.

	\item Implementing graph-theoretic quantum dimension witnesses in actual experiments. 
	\item Obtaining the classical memory cost \cite{kleinmann2011memory,CGGX18} for simulating graph-theoretic dimension witnesses and identifying quantum correlations achievable with low-dimensional quantum systems but requiring very-high dimensional classical systems.
	\item Extending the graph-theoretic framework to classical dimension witnessing.
	\item Developing a general graph-theoretic framework to analyse and unify different approaches to dimension witnessing.
\end{itemize}


\section*{Acknowledgments}


The authors thank Zhen-Peng Xu for valuable comments on the ar$\chi$iv version and suggesting the use of weighted graphs for increasing the quantum-classical gap as described in the conclusions. The authors also thank Antonios Varvitsiotis for helpful discussions. We also thank the National Research Foundation of Singapore, the Ministry of Education of Singapore for financial support. This work was also  supported by \href{http://dx.doi.org/10.13039/100009042}{Universidad de Sevilla} Project Qdisc (Project No.\ US-15097), with FEDER funds, \href{http://dx.doi.org/10.13039/501100001862}{MINECO} Projet No.\ FIS2017-89609-P, with FEDER funds, and QuantERA grant SECRET, by \href{http://dx.doi.org/10.13039/501100001862}{MINECO} (Project No.\ PCI2019-111885-2). 


\bibliographystyle{alpha}
\bibliography{dimwit}




	
\end{document}